\definecolor{darkred}  {rgb}{0.5,0,0}
\definecolor{darkblue} {rgb}{0,0,0.5}
\definecolor{darkgreen}{rgb}{0,0.5,0}
\theoremstyle{definition}
\newtheorem{proposition}{Proposition}
\newtheorem{thm}{Theorem}
\definecolor{cool_green}{rgb}{0.0, 0.5, 0.0}
\begin{document}

%\setlength{\parindent}{0pt}
%\title{Quantifying performance of quantum imaging using learning theory}
\title{Gaussian quantum data hiding}
%\date{}

\author{Yunkai Wang}
\email{y697wang@uwaterloo.ca}
\affiliation{Institute for Quantum Computing, University of Waterloo, Ontario N2L 3G1, Canada.}
\affiliation{Department of Applied Mathematics, University of Waterloo, Ontario N2L 3G1, Canada.}
\affiliation{Perimeter Institute for Theoretical Physics, Waterloo, Ontario N2L 2Y5, Canada.}

\author{Graeme Smith}
\email{graeme.smith@uwaterloo.ca}
\affiliation{Institute for Quantum Computing, University of Waterloo, Ontario N2L 3G1, Canada.}
\affiliation{Department of Applied Mathematics, University of Waterloo, Ontario N2L 3G1, Canada.}

\begin{abstract}
Quantum data hiding encodes a hidden classical bit to a pair of quantum states that is difficult to distinguish using a particular set of measurement, denoted as $M$. In this work, we explore quantum data hiding in two contexts involving Gaussian operations or states. First, we consider the set of measurement $M$ as Gaussian local quantum operations and classical communication, a new set of operations not previously discussed in the literature for data hiding. We hide one classical bit in the two different mixture of displaced two-mode squeezed states. Second, we consider the set of measurement $M$ as general Gaussian measurement and construct the data hiding states using two-mode thermal states. This data hiding scheme is effective in the weak strength limit, providing a new example compared to existing discussions for the set of general Gaussian measurement.
\end{abstract}

\maketitle

\section{Introduction}

Discriminating between quantum states or channels is a fundamental task in quantum information science \cite{duan2009perfect,acin2001statistical,duan2007entanglement,salek2022usefulness,becker2021energy,duan2008local}. The fundamental lower bound of success probability of distinguishing the states is known as Helstrom bound \cite{holevo1973statistical,holevo1976investigations,helstrom1967detection,helstrom1976quantum}. The optimal measurement used to saturate the Helstrom bound is chosen from all physically allowed positive operator valued measure (POVM), which can be hard to implement. A practically interesting problem is whether we can limit the available measurement to a set $M$ and ask for the success probability using this set \cite{walgate2000local,virmani2001optimal,matthews2010entanglement,yu2012four,fan2004distinguishability,duan2009distinguishability,childs2013framework}. For example, in the multipartite setting, local quantum operations assisted by
classical communication (LOCC) is an interesting set $M$ \cite{chitambar2014everything}. 

Quantum data hiding conceals classical bits against a specific set of POVMs $M$ by encoding the bits into quantum states, ensuring that the probability of successfully distinguishing between these states is no better than random guessing when only POVMs from the set $M$ are employed. This phenomenon was initially identified in Ref.~\cite{terhal2001hiding,divincenzo2002quantum} for the set of LOCC. This discovery allows for the secure hiding of one bit of classical information from any attempts to cheat using LOCC. 
Since then, the concept of quantum data hiding has been extended in various ways. For instance, hiding classical bits in multipartite settings is discussed in Ref.~\cite{eggeling2002hiding}. Protocols for hiding quantum data in both bipartite and multipartite cases are constructed \cite{divincenzo2003hiding,hayden2004randomizing,hayden2005multiparty}. 
%above is properly introduced in [Physical Review A 71, 062339 (2005)]
Data hiding in the presence of noise is considered \cite{lupo2016quantum}.
It is shown that using many copies of specific data hiding states does not provide a disproportionate advantage over using a single copy \cite{matthews2009chernoff}.
%the success probability can just be nth power of the one copy case. Even if we allow collective measurement over n copies of states, it perform similar to measuring each single copy.
Additionally, distinguishability norms have been introduced to study different sets of POVMs \cite{matthews2009distinguishability}.
%they define the sup of the total variation over a set of POVM as norm, they prove in detail that this is indeed a norm and later on, they use this norm to discuss some one party measurement and bipartite measurement

In this work, we focus on the continuous variable (CV) version of quantum data hiding. Unlike discrete-variable quantum information, which primarily relies on qubits as the fundamental units of information, CV quantum information adopts a continuous-variable approach. This framework utilizes quantum states of systems characterized by continuous degrees of freedom, such as position and momentum of light fields \cite{braunstein2005quantum,weedbrook2012gaussian,adesso2014continuous,holevo2011probabilistic}. The primary tools in CV quantum information processing are Gaussian states and Gaussian operations. Gaussian states are characterized by their representation through Gaussian functions, making them mathematically convenient and experimentally accessible. Gaussian operations, in turn, are transformations that map Gaussian states to other Gaussian states.
%Theoretically, it simplifies analytical tools, and experimentally, it ensures the availability of optical components for Gaussian processes in the laboratory.
We will discuss two set of POVM: (1) Gaussian local quantum operations assisted by classical communication (GLOCC) (2) general Gaussian operations.  We note there exists some discussion of data hiding from Gaussian operations \cite{sabapathy2021bosonic,lami2021quantum}.

For the first setting, data hiding from GLOCC, Ref.~\cite{lami2021quantum} discusses a closely related result under different conditions. They examine data hiding using CV states from LOCC, which includes non-Gaussian LOCC in their considered set $M$. They allow both the state and measurement to be non-Gaussian, but lacks concrete examples. In contrast, we consider GLOCC as the set $M$ and provide a concrete example of data hiding state which is a mixture of Gaussian states.

%For the first setting, data hiding from GLOCC, Ref.~\cite{lami2021quantum} discusses a closely related result under different conditions. They examine data hiding using CV states from LOCC, which includes non-Gaussian LOCC in their considered set $M$. Additionally, they provide an upper bound on the data hiding error probability for CV states with bounded energy. However, they do not offer a concrete example of data hiding using CV states from LOCC, only suggesting that the Werner state can be embedded into infinite-dimensional systems. Overall, their discussion focuses on data hiding from LOCC, allowing both the state and measurement to be non-Gaussian, but lacks concrete examples. In contrast, we consider GLOCC as the set $M$ and provide a concrete example of data hiding state which is a mixture of Gaussian states.

%For the first problem of data hiding from GLOCC, Ref.~\cite{lami2021quantum} provides a result in a similar but different condition. They discuss the data hiding using CV states from LOCC, in other word non-Gaussian LOCC is also included in their considered set $M$. Furthermore, they give an upper bound of data hiding error probability for the CV state whose energy is upper bounded. They did not give an concrete example of data hiding using CV state based on LOCC except arguing the Werner state can be embedded into infinite dimensional systems. In general, they discuss the data hiding from LOCC, but both the state and measurement can be non-Gaussian. And they do not give concrete examples.

For the second setting, data hiding from general Gaussian measurements, Ref.~\cite{lami2021quantum} provides an example using single-mode CV data hiding states known as even and odd thermal states. However, it is important to note that the state considered in their work is neither Gaussian nor a mixture of Gaussian states. Additionally, Ref.~\cite{sabapathy2021bosonic} demonstrated the existence of two Gaussian states with infinitely many modes, each being a mixture of a finite set of randomly chosen coherent states. In both examples, the two states cannot be distinguished by Gaussian measurements, while general measurements can achieve a success probability close to one. Their constructions rely on quite nonclassical states or random mixtures of coherent states, whereas our approach is based on simple thermal states, providing an example of  derandomized and concrete data hiding states. Furthermore, while their discussion is rooted in information-theoretic arguments, our work specifies the exact measurements required for the data hiding scheme.

%For the second problem of data hiding from general Gaussian measurement. Ref.~\cite{lami2021quantum} provides an example of single mode CV data hiding state referred as even and odd thermal states. But note the state considered in their work is not Gaussian, or even mixture of Gaussian states. Ref.~\cite{sabapathy2021bosonic} showed that there exists two Gaussian states with infinitely many modes, each of which is a mixture of a finite set of randomly chosen coherent state, the Gaussian measurement cannot distinguish the these two states. But a general measurement can have a success probability close to one. Their construction is based on the random mixture of coherent state while our construction is based on  simple two mode thermal states, which is an example of derandomised and concrete data hiding state. And their discussion is based on the information theoretic argument while our discussion gives the concrete measurement needed for the data hiding scheme. In general, our results should be regarded as a more simple construction of data hiding from Gaussian measurement with more detailed discussion on measurement and the relative intuition.

\section{Data Hiding from GLOCC}

In the following sections, we will begin with an instructive discussion on the performance of information extraction using GLOCC measurements on displaced two-mode squeezed states.
Using this understanding, we will then construct data hiding states that are secure against GLOCC.

\subsection{GLOCC measurements perform less effectively in decoding information from two-mode squeezed states}

In this subsection, we aim to build some intuition about when nonlocal Gaussian measurements can outperform GLOCC measurements in decoding information. Assume Alice and Bob share a displaced two-mode squeezed state as described by
\begin{equation}\label{rho_two_mode}
\rho_{\vec{r}}=(D_{a+ib}\otimes D_{c+id})\ket{\phi}\bra{\phi} (D_{a+ib}\otimes D_{c+id})^\dagger,   
\end{equation}
where $\vec{r}=[a,b,c,d]^T$, $D_\alpha=\exp(\alpha \hat{a}_i^\dagger-\alpha^* \hat{a}_i)$ is the displacement operation, $\hat{a}_i$ represents the annihilation operator for the corresponding mode on Alice's or Bob's sides, $\ket{\phi}=\exp(s(\hat{a}_1\hat{a}_2-\hat{a}_1^\dagger\hat{a}_2^\dagger)/2)\ket{0}$ is the two mode squeezed state with squeezing parameter $s$. The parameters $a,b,c,d$ are encoded on the quadratures of $\rho_{\vec{r}}$. Assume these parameters follow a prior distribution described by
\begin{equation}\begin{aligned}
\label{eq:prior_I}
P(\vec{r})&=\frac{1}{(2\pi)^2\sqrt{\det V_r}}\exp[-\frac{1}{2}\vec{r}^TV_r^{-1}\vec{r}],\\
&=\frac{1}{4\pi^2\sigma^4}\exp{-(a^2+b^2+c^2+d^2)/2\sigma^2}. 
\end{aligned}\end{equation}
Alice and Bob want to measure $\rho_{\vec{r}}$ with the POVM $\{M_x\}_x$ to decode the parameters $a,b,c,d$. We can treat this problem as a communication model with input $a,b,c,d$ and output $x$. In the following, we want to show that the mutual information for a proper nonlocal Gaussian measurement is much greater than any GLOCC. 

Any Gaussian measurement can be written as the form \cite{adesso2014continuous,weedbrook2012gaussian}
\begin{equation}\label{Pi_y}
\Pi_{\vec{y}}=\frac{1}{\pi^2}D_{\vec{y} }\Pi_0D^\dagger_{\vec{y}},
\end{equation}
where $\Pi_0$ is a density matrix of a general  Gaussian state with vanishing displacement and covariance matrix $V_\Pi$.  Note that the label $\vec{y}$ of the outcome is only related to the displacement of $\Pi_{\vec{y}}$. We now want to find the probability distribution $P(\vec{y}|\vec{r})=\tr(\Pi_{\vec{y}}\rho_{\vec{r}})$, where $\vec{r}$ is the information we want to send. Note for two operators $A,B$ whose Wigner function is $W_{A}(\vec{q},\vec{p}),W_{B}(\vec{q},\vec{p})$, we have
$\tr[AB]\propto\int d\vec{q}d\vec{p}W_{A}(\vec{q},\vec{p})W_{B}(\vec{q},\vec{p})$ \cite{case2008wigner}. The Wigner function for $\Pi_{\vec{y}}$ and $\rho_{\vec{r}}$ is given by
\begin{equation}
W_\Pi(q_1,p_1,q_2,p_2)=\frac{1}{\pi^2}\frac{\exp[-\frac{1}{2}(\vec{x}-\vec{y})^TV_\Pi^{-1}(\vec{x}-\vec{y})]}{(2\pi)^2\sqrt{\det V_{\Pi}}},
\end{equation}
\begin{equation}
W_\rho(q_1,p_1,q_2,p_2)=\frac{\exp[-\frac{1}{2}(\vec{x}-\vec{r})^TV_\rho^{-1}(\vec{x}-\vec{r})]}{(2\pi)^2\sqrt{\det V_\rho}},
\end{equation}
where $\vec{x}=[q_1,p_1,q_2,p_2]^T$, $\vec{y}=[y_1,y_2,y_3,y_4]^T$, the covariance matrix of $\rho_{\vec{r}}$ is given by
\begin{equation}\label{eq:TMSV}
V_\rho=\left[\begin{matrix}
\cosh 2s I & \sinh 2s Z\\
\sinh 2s Z & \cosh 2s I
\end{matrix}\right].
\end{equation}
Since the integral is simply a Gaussian integral, we can easily find
\begin{equation}\begin{aligned}
&P(\vec{y}|\vec{r})=\frac{1}{(2\pi)^2\sqrt{\det V}}\exp[-\frac{1}{2}(\vec{y}-\vec{r})^TV^{-1}(\vec{y}-\vec{r})],
\end{aligned}\end{equation}
where $V=V_\Pi+V_\rho$. And the mutual information is
\begin{equation}
I(\vec{y};\vec{r})=\frac{1}{2}\log(\det(I+\sigma^2 V^{-1})),\quad V=V_\Pi+V_\rho.
\end{equation}
We now aim to optimize $V_\Pi$ to maximize the mutual information $I(\vec{y};\vec{r})$ in the case of nonlocal Gaussian measurement and GLOCC.

\begin{proposition}\label{prop:nonlocal_Gaussian}
There exists a nonlocal Gaussian measurement with outcome labeled by $\vec{y}$ in Eq.~\ref{Pi_y} on the two-mode squeezed state $\rho_{\vec{r}}$ given in Eq.~\ref{rho_two_mode} encoded with the parameter $\vec{r}$ that can achieve the mutual information scaling $I(\vec{y};\vec{r})=2s$ to the leading order as $s\rightarrow\infty$.
\end{proposition}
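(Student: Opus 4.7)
The plan is to supply an explicit witness for the proposition by taking the seed $\Pi_0$ of the Gaussian measurement to be a two-mode squeezed vacuum whose covariance $V_\Pi$ mirrors the block structure of $V_\rho$ in Eq.~\ref{eq:TMSV} but with a tunable squeezing parameter $t$, and then sending $t\to\infty$. Concretely, I would take
\begin{equation*}
V_\Pi = \begin{pmatrix} \cosh 2t\, I & \sinh 2t\, Z \\ \sinh 2t\, Z & \cosh 2t\, I \end{pmatrix},
\end{equation*}
which is the covariance matrix of a pure Gaussian state for every $t\ge 0$ and therefore defines a legitimate Gaussian POVM through Eq.~\ref{Pi_y}. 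The key observation is that $V_\rho$ and $V_\Pi$ are both of the form $\alpha I_4 + \beta M$, where $M$ is a single involution (built from a block-off-diagonal matrix with $Z$'s) with eigenvalues $\pm 1$ each of multiplicity two. Therefore $V_\rho$ and $V_\Pi$ are simultaneously diagonalizable, and $V = V_\Pi + V_\rho$ has eigenvalues $e^{2s}+e^{2t}$ and $e^{-2s}+e^{-2t}$, each doubly degenerate.

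Plugging these four eigenvalues into the mutual-information formula derived just above the proposition, I expect
\begin{equation*}
I(\vec{y};\vec{r}) = \log\!\left(1 + \frac{\sigma^2}{e^{2s}+e^{2t}}\right) + \log\!\left(1 + \frac{\sigma^2}{e^{-2s}+e^{-2t}}\right).
\end{equation*}
The first term vanishes as $t\to\infty$, while the second converges to $\log(1+\sigma^2 e^{2s}) = 2s + \log\sigma^2 + o(1)$ as $s\to\infty$. Hence for every $s$, choosing $t$ large enough yields a Gaussian measurement with $I(\vec{y};\vec{r}) = 2s$ to leading order, which is exactly the claim. Physically, the $t\to\infty$ limit is nothing but a continuous-variable Bell-type joint-homodyne measurement on the two modes, so this candidate is very natural: it measures the correlated quadrature combinations and resolves them with accuracy $\sim e^{-2s}$, each of the two surviving directions contributing $\sim s$ to the mutual information.

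The step that needs the most care, and which I regard as the only real obstacle, is fixing the orientation of $V_\Pi$ correctly. Had I put $-\sinh 2t\, Z$ off the diagonal instead, the eigenvalues of $V$ would have been $e^{\pm 2s}+e^{\mp 2t}$, both of whose reciprocals shrink exponentially as $s,t$ grow, and the mutual information would decay to zero. So the substantive content of the argument is recognizing that the measurement must \emph{co-align} with rather than oppose the squeezing of $\rho_{\vec{r}}$, so that the squeezed quadratures of the TMSV state survive the measurement noise and yield the required scaling, while the anti-squeezed directions are simply sacrificed.
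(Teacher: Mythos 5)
Your construction is correct and is essentially the paper's own: your TMSV seed covariance shares the eigenvectors $\omega_{\rho i}$ of $V_\rho$ with eigenvalues $e^{\pm 2t}$, so it coincides with the paper's $V_\Pi$ in Eq.~\ref{nonlocal} upon identifying $\delta = e^{-2t}$, and the $t\to\infty$ limit is exactly the paper's beam-splitter-plus-homodyne measurement of $q_1-q_2$ and $p_1+p_2$. Your explicit eigenvalue computation $\lambda(V)\in\{e^{2s}+e^{2t},\, e^{-2s}+e^{-2t}\}$ (each doubly degenerate) and the resulting $I(\vec{y};\vec{r}) = 2s + O(1)$, together with the correct observation that the seed must be squeezed along the same quadrature combinations as $\rho_{\vec{r}}$, reproduce the paper's argument in slightly more explicit form.
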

\begin{proof}
We note the eigenspectrum of $V_\rho$ are,
\begin{equation}\begin{aligned}\label{Vrho_spectrum}
&\lambda_{\rho1}=
\lambda_{\rho2}=e^{2s},\quad\lambda_{\rho3}=\lambda_{\rho4}=e^{-2s},\\
&\omega_{\rho1}=\frac{1}{\sqrt{2}}[0,-1,0,1]^T,\quad \omega_{\rho2}=\frac{1}{\sqrt{2}}[1,0,1,0]^T,\\
&\omega_{\rho3}=\frac{1}{\sqrt{2}}[0,1,0,1]^T,\quad \omega_{\rho4}=\frac{1}{\sqrt{2}}[-1,0,1,0]^T.\\
\end{aligned}\end{equation}
%In the following, we will discuss the choice of $V_\Pi$ in the eigenbasis $r_{1,2,3,4}$. 
Intuitively, the homodyne detection and beam splitter can estimate linear combinations of all the quadratures $q_1=a+dq_1$, $p_1=b+dp_1$, $q_2=c+dq_2$, $p_2=d+dp_2$, where $a,b,c,d$ are the displacement, $dq_{1,2}, dp_{1,2}$ are the intrinsic noise depending on the quantum state. We emphasize that while homodyne detection enables perfect precision in estimating certain quadratures, this precision only implies that no additional noise is introduced by the measurement itself; the intrinsic noise inherent to the quantum state still persists. For the two mode squeezed state considered here, if $r\rightarrow\infty$, we know $dq_1-dq_2$ and $dp_1+dp_2$ are approaching zero. But for example, if we just focus on $dq_1$, it is totally random. Due to this property, we should only be able to estimate $a-c+dq_1-dq_2\rightarrow a-c$ and $b+d+dp_1+dp_2\rightarrow b+d$.  These are the linear combinations that can be extracted from the two-mode squeezed state with perfect precision at an infinite squeezing level. This corresponds to estimating $q_1-q_2$ and $p_1+p_2$ which is a nonlocal measurement implemented with beam splitter and homodyne detection. The corresponding $V_\Pi$ in the eigenbasis $\{\omega_{\rho i}\}_{i=1,2,3,4}$ of $V_\rho$ is simply 
\begin{equation}\label{nonlocal}
V_\Pi=\lim_{\delta\rightarrow0}\left[
\frac{1}{\delta}(\omega_{\rho1}\omega_{\rho1}^T+ \omega_{\rho2}\omega_{\rho2}^T)+{\delta}(\omega_{\rho3}\omega_{\rho3}^T+ \omega_{\rho4}\omega_{\rho4}^T)\right].
\end{equation}
%\begin{equation}\label{nonlocal}
%V_\Pi=\left[\begin{matrix}
%\infty & 0 & 0 & 0\\
%0 & \infty & 0 & 0\\
%0 & 0 & 0 & 0\\
%0 & 0 & 0 & 0\\
%\end{matrix}\right].
%\end{equation}
We can find that for this $V_\Pi$, 
\begin{equation}
\det(I+\sigma^2 V^{-1})\propto e^{4s},\quad I(\vec{y};\vec{r})= 2s+o(s),
\end{equation}
to the leading order as $s\rightarrow\infty$ and $e^{-2s}\gg \delta$.
\end{proof}

\begin{thm}\label{thm:I_GLOCC}
Any GLOCC measurement with outcome labeled by $\vec{y}$ in Eq.~\ref{Pi_y} on the two-mode squeezed state $\rho_{\vec{r}}$ given in Eq.~\ref{rho_two_mode} encoded with the parameter $\vec{r}$ cannot achieve the mutual information scaling $I(\vec{y};\vec{r})=2s$ to the leading order as $s\rightarrow\infty$.
\end{thm}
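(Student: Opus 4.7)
The plan is to show that the block-diagonal structure forced on $V_\Pi$ by GLOCC prevents it from squeezing two nonlocal directions simultaneously, bounding the mutual information by $s+O(1)$ rather than the $2s$ of Proposition~\ref{prop:nonlocal_Gaussian}.

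First, I will reduce to block-diagonal $V_\Pi$. A Gaussian POVM of the form Eq.~(\ref{Pi_y}) implementable by GLOCC has a separable seed $\Pi_0$; a product GLOCC protocol directly gives $V_\Pi=V_{\Pi_A}\oplus V_{\Pi_B}$ block-diagonal, and for the most general separable seed the covariance criterion for Gaussian separability gives $V_\Pi\geq V_{\Pi_A}\oplus V_{\Pi_B}$ with $\det V_{\Pi_A},\det V_{\Pi_B}\geq 1$. By monotonicity of the matrix inverse and of $\det$ on positive matrices, $\det(I+\sigma^2(V_\Pi+V_\rho)^{-1})\leq\det(I+\sigma^2(V_{\Pi_A}\oplus V_{\Pi_B}+V_\rho)^{-1})$, so it suffices to prove the bound for block-diagonal $V_\Pi$.

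Second, I will bound the restriction of $V_\Pi$ to the squeezed subspace $S=\mathrm{span}(\omega_{\rho 3},\omega_{\rho 4})$, on which $V_\rho$ has the small eigenvalue $e^{-2s}$ and whose basis vectors are nonlocal (corresponding to the quadratures $p_1+p_2$ and $q_2-q_1$). Writing $A=V_{\Pi_A}$, $B=V_{\Pi_B}$, a direct basis change gives
\[
M := P_S V_\Pi P_S = \tfrac{1}{2}\begin{pmatrix} A_{22}+B_{22} & B_{12}-A_{12} \\ B_{12}-A_{12} & A_{11}+B_{11}\end{pmatrix}.
\]
Expanding, $4\det M=\det A+\det B+\mathrm{tr}(\tilde A B)$, where $\tilde A$ is $A$ with its diagonal entries swapped; since $A\geq 0$ implies $\tilde A\geq 0$, the trace term is nonnegative, yielding $\det M\geq 1/2$.

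Third, I will convert this into an information bound. In the $\omega_\rho$ basis, $V=V_\Pi+V_\rho$ has the block structure $\bigl(\begin{smallmatrix} M+e^{2s}I & C \\ C^T & M+e^{-2s}I \end{smallmatrix}\bigr)$ with a cross-block $C$. A Schur-complement computation, with $C$ subleading for large $s$, yields $\det V\approx\det(M+e^{2s}I)\det(M+e^{-2s}I)$ and analogously for $\det(V+\sigma^2 I)$ by the substitution $M\to M+\sigma^2 I$. With $\mu_1,\mu_2$ the eigenvalues of $M$,
\[
\det(I+\sigma^2 V^{-1})\approx\prod_{i=1,2}\bigl(1+\tfrac{\sigma^2}{\mu_i+e^{2s}}\bigr)\bigl(1+\tfrac{\sigma^2}{\mu_i+e^{-2s}}\bigr).
\]
The constraint $\mu_1\mu_2\geq 1/2$ forces that if $\mu_1\leq e^{-2s}$ then $\mu_2\geq e^{2s}/2$, so the $i=1$ factor contributes at most $O(\sigma^2 e^{2s})$ while the other three factors are $1+O(e^{-2s})$; otherwise all factors are $O(1)$. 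In every case $\det(I+\sigma^2 V^{-1})=O(e^{2s})$ and $I(\vec y;\vec r)\leq s+O(1)$, strictly below $2s$. The main obstacle is Step~1: rigorously translating the GLOCC assumption into the matrix inequality $V_\Pi\geq V_{\Pi_A}\oplus V_{\Pi_B}$ requires care with adaptive protocols and the translation-covariant form of Eq.~(\ref{Pi_y}). A secondary technicality is controlling the cross-block $C$ in Step~3 in the homodyne-like limits where $V_{\Pi_A},V_{\Pi_B}$ acquire diverging entries.
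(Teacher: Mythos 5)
Your route is genuinely different from the paper's. The paper argues by contradiction on the structure of the seed: it shows that achieving $\det(I+\sigma^2V^{-1})\propto e^{4s}$ forces $V_\Pi$ to have exactly two vanishing eigenvalues with eigenvectors confined to $\mathrm{span}\{\omega_{\rho3},\omega_{\rho4}\}$, and then exhibits an explicitly negative eigenvalue of $TV_\Pi T+i\Omega$, i.e.\ a PPT violation, so such a seed cannot come from GLOCC. You instead invoke the separability (Werner--Wolf) decomposition $V_\Pi\geq V_{\Pi_A}\oplus V_{\Pi_B}$, reduce to block-diagonal seeds by determinant monotonicity, and convert the local uncertainty relations $\det V_{\Pi_A},\det V_{\Pi_B}\geq 1$ into the bound $\det M\geq 1/2$ on the compression onto the EPR plane (your Step 2 algebra, $4\det M=\det A+\det B+\mathrm{tr}(\tilde A B)$, is correct). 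If completed, this buys more than the paper proves: a quantitative bound $I(\vec y;\vec r)\leq s+O(1)$, matching the numerics that the paper says it could not establish analytically. Note also that your Step 1 is not really the main obstacle you fear: the paper makes the very same implicit reduction (GLOCC POVM elements are separable operators, the displacements are local, hence the seed is a separable Gaussian state), only it uses the PPT consequence instead of the Werner--Wolf form; for one mode versus one mode the two are equivalent, and the monotonicity step $\det(I+\sigma^2(V_\Pi+V_\rho)^{-1})\leq\det(I+\sigma^2(V_{\Pi_A}\oplus V_{\Pi_B}+V_\rho)^{-1})$ is sound.

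The genuine gap is in Step 3, and it is more than a technicality. For block-diagonal $V_\Pi=A\oplus B$ written in the $\omega_\rho$ basis, the cross-block is $C=\tfrac12\bigl(\begin{smallmatrix} B_{22}-A_{22} & A_{12}+B_{12}\\ A_{12}+B_{12} & B_{11}-A_{11}\end{smallmatrix}\bigr)$, whose entries are of the same order as those of $M$; in the homodyne-like limits, which are exactly the seeds a GLOCC adversary would use, $C$ diverges together with $M$, so the assertion ``$C$ subleading'' and the factorization $\det V\approx\det(M+e^{2s}I)\det(M+e^{-2s}I)$ are unjustified as stated. Moreover the cheap variational facts point the wrong way: Cauchy interlacing and the identity $\det(P_T V P_T|_T)\geq\lambda_3(V)\lambda_4(V)$ show that a large determinant of the compression onto the EPR plane does not by itself lower-bound the product of the two smallest eigenvalues of $V$, which is what your final display needs. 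To close the gap you must feed the positivity of the full seed back in: for block-diagonal $V_\Pi$ the compressions onto the squeezed and antisqueezed EPR planes are both equal to $M$, so $\bigl(\begin{smallmatrix} M & C\\ C^{T} & M\end{smallmatrix}\bigr)\geq 0$ gives $C^{T}M^{-1}C\leq M$, and inserting this into the exact Schur complement $M+e^{-2s}I-C^{T}(M+e^{2s}I)^{-1}C$ (rather than dropping $C$) is what controls the cancellation and yields $\det(I+\sigma^2V^{-1})=O(e^{2s})$; a short check in the diagonal case confirms the mechanism, but the general argument has to be written out, since it is the heart of the proof rather than bookkeeping. A minor additional tidy-up: in your final case analysis, ``otherwise all factors are $O(1)$'' should be replaced by the split $\mu_1\geq 1$ versus $\mu_1<1$ (using $\mu_1\mu_2\geq 1/2$), which still gives $O(e^{2s})$ overall.
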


\begin{proof}
We will prove any GLOCC cannot achieve $I(\vec{y};\vec{r})=2s$ to the leading order as $s\rightarrow\infty$ by contradiction.
Firstly, we want to prove that to have $I(\vec{y};\vec{r})= 2s+o(s)$, or $\det(I+\sigma^2 V^{-1})\propto e^{4s}$, $V_\Pi$ has exactly two  eigenvalues approach zero ($\ll e^{-2s}$). Because if $V_\Pi$ has less than two eigenvalues approach zero, based on Weyl's inequality for $n\times n$ matrices $A,B$ \cite{horn2012matrix}
\begin{equation}
\lambda_{i+j-1}(A+B)\leq \lambda_i(A)+\lambda_j(B)\leq \lambda_{i+j-n}(A+B),
\end{equation}
where $\lambda_1\geq \lambda_2\geq \cdots\geq \lambda_n$.
We have 
\begin{equation}\begin{aligned}
&\lambda_3(V_\Pi+V_\rho)\geq \lambda_3(V_\Pi)+\lambda_4(V_\rho)\geq \lambda_3(V_\Pi)\sim O(e^{0r}),\\
&\lambda_4(V_\Pi+V_\rho)\geq \lambda_4(V_\Pi)+\lambda_4(V_\rho)\geq \lambda_4(V_\rho)\sim O(e^{-2s}).
\end{aligned}\end{equation}
Then, $V=V_\Pi+V_\rho$ only has one small eigenvalues scaling as $e^{-2s}$, which shows $\det(I+\sigma^2 V^{-1})$ at most scales as $e^{2s}$. Furthermore, $V_\Pi$ cannot have more than two eigenvalues approaching zero. 
Because for any real symmetric matrix $V_\Pi$, the requirement for it to be the covariance matrix is given by \cite{weedbrook2012gaussian}
\begin{equation}
V_\Pi>0,\quad V_\Pi+i\Omega\geq 0,\quad \Omega=\left[\begin{matrix}
0 & 1 & 0 & 0\\
-1 & 0 & 0 & 0\\
0 & 0 & 0 & 1\\
0 & 0 & -1 & 0
\end{matrix}\right].
\end{equation}
If $\lambda_{2,3,4}(V_\Pi)\rightarrow0$, and we know the eigenvalues of $i\Omega$ are 1,1,-1,-1. We can find that 
\begin{equation}
\lambda_4(V_\Pi+i\Omega)\leq \lambda_2(V_\Pi)+\lambda_3(i\Omega)\sim 0-1<0,
\end{equation}
which violates the requirement that $V_\Pi+i\Omega\geq 0$. This actually shows that $\lambda_{1,2}(V_\Pi)\geq 1$.

Secondly, we aim to demonstrate that the eigenvectors of $V_\Pi$ must correspond to the eigenvectors of $V_\rho$, as specified below. If the eigenvector $\omega_V$ of $V=V_\Pi+V_\rho$ has the corresponding eigenvalue $\lambda_V\rightarrow0$, i.e. $\omega_V$ is one of $\omega_{V3,4}$
\begin{equation}
V\omega_V=\lambda_V\omega_V.
\end{equation}
We can multiply $\omega_V^T$ from the left on the both side of the equation, which gives
\begin{equation}
\omega_V^TV\omega_V=\lambda_V\rightarrow0.
\end{equation}
If we write the spectrum decomposition $V_\Pi=\sum_{i=1,2,3,4}\lambda_{\Pi i} \omega_{\Pi i}\omega_{\Pi i}^T$, $V_\rho=\sum_{i=1,2,3,4}\lambda_{\rho i}\omega_{\rho i}\omega_{\rho i}^T$, then we can find that
\begin{equation}\begin{aligned}
&e^{-2s}[(\omega^T_V\omega_{\rho 3})^2+(\omega^T_V\omega_{\rho 4})^2]+e^{2s}[(\omega^T_V\omega_{\rho 1})^2\\
&+(\omega^T_V\omega_{\rho 2})^2]+\sum_i\lambda_{\Pi i}(\omega_V^T\omega_{\Pi i})^2=\lambda_V\rightarrow0.
\end{aligned}\end{equation}
Then, it is clear that $\omega^T_V\omega_{\Pi 1,2}\rightarrow0 $, $\omega^T_V\omega_{\rho 1,2}\rightarrow0$. We must have $\omega_{\Pi 3,4},\omega_{V 3,4}$ in the $\text{span}\{\omega_{\rho 3}, \omega_{\rho 4}\}$, $\omega_{\Pi 1,2},\omega_{V 1,2}$ in the $\text{span}\{\omega_{\rho 1}, \omega_{\rho 2}\}$.

Finally, based on the observed requirements for $V_\Pi$ to achieve $\det(I+\sigma^2 V^{-1})\propto e^{4s}$—namely, $\lambda_{V3,4} \rightarrow 0$, with $\omega_{\Pi 3,4}$ confined to the $\text{span}\{\omega_{\rho 3}, \omega_{\rho 4}\}$ and $\omega_{\Pi 1,2}$ confined to the $\text{span}\{\omega_{\rho 1}, \omega_{\rho 2}\}$—we are now prepared to demonstrate that $V_\Pi$ cannot be GLOCC. %Intuitively, the POVM with $V_\Pi$ in the above form will do homodyne detection on two linear combinations of $q_1-q_2$ and $p_1+p_2$, which cannot be implemented only with local operations. 
This can be easily verified because if
\begin{equation}\begin{aligned}
&V_\Pi=\sum_{i=1,2,3,4}\lambda_{\Pi i}\omega_{\Pi i} \omega_{\Pi i}^\dagger,\quad \lambda_{\Pi 1,2}\geq 1,\quad \lambda_{\Pi 3,4}\ll 1,\\
&\omega_{\Pi 1}\!=\!x_1\omega_{\rho 1}\!+\!\sqrt{1-x_1^2}\omega_{\rho2},\, \omega_{\Pi2}\!=\!\sqrt{1-x_1^2}\omega_{\rho1}\!-\!x_1\omega_{\rho2},\\
&\omega_{\Pi 3}\!=\!x_3\omega_{\rho3}\!+\!\sqrt{1-x_3^2}\omega_{\rho4},\,\omega_{\Pi 4}\!=\!\sqrt{1-x_3^2}\omega_{\rho3}\!-\!x_3\omega_{\rho4},\\
\end{aligned}\end{equation}
where $\omega_{\rho i}$ are the eigenvector of $V_\rho$ given in Eq.~\ref{Vrho_spectrum}, $|x_{1,3}|\leq 1$. $V_\Pi$ follows the Positive Partial Transpose (PPT) criterion iff $TV_\Pi T+i\Omega\geq 0$ \cite{weedbrook2012gaussian}, where 
\begin{equation}
T=\left[\begin{matrix}
1 & 0 & 0 & 0\\
0 & 1 & 0 & 0\\
0 & 0 & 1 & 0\\
0 & 0 & 0 & -1
\end{matrix}\right],
\end{equation}
and $TV_\Pi T$ corresponds to partial transpose.
 We can find one of eigenvalues of $TV_\Pi T+i\Omega$ is
\begin{equation}
\frac{1}{2}\left(\lambda_{\Pi 3}+\lambda_{\Pi 4}-\sqrt{4+(\lambda_{\Pi 3}-\lambda_{\Pi 4})^2}\right).
\end{equation}
It is clear that when $\lambda_{\Pi 3,4}\ll 1$, this eigenvalue is smaller than 0, which implies $V_\Pi$ cannot be a GLOCC measurement. We have thus showed that GLOCC measurements can only have mutual information  worse than $2s$. %Actually, we have shown that for any measurement different from Eq.~\ref{nonlocal}, the scaling is worse than $e^{4r}$ when $r\rightarrow\infty$.
\end{proof}

Although the analytical proof shows that GLOCC cannot achieve $I(\vec{y}; \vec{r}) = 2s$, we are unable to analytically determine a tight upper bound for $I(\vec{y}; \vec{r})$ using GLOCC.
We now aim to numerically optimize the GLOCC measurement to determine the optimal mutual information \( I(\vec{y}; \vec{r}) \). As illustrated in Fig.~\ref{fig:mutual_information}, for small squeezing parameters \( s \ll 1 \), optimal \( I(\vec{y}; \vec{r}) \) using GLOCC remains nearly constant. This behavior is expected, as in the absence of squeezing, the measurement effectively reduces to estimating the displacement of a coherent state. However, as the squeezing parameter \( s \) increases, the optimal mutual information \( I(\vec{y}; \vec{r})\approx s \) under GLOCC measurements. This numerical result aligns with and confirms the analytical findings in Theorem \ref{thm:I_GLOCC}, demonstrating the limitations of GLOCC. For nonlocal Gaussian measurement, $I(\vec{y}; \vec{r})\approx 2s $ as stated in Proposition~\ref{prop:nonlocal_Gaussian}.

\begin{figure}[!tb]
\begin{center}
\includegraphics[width=0.4\textwidth]{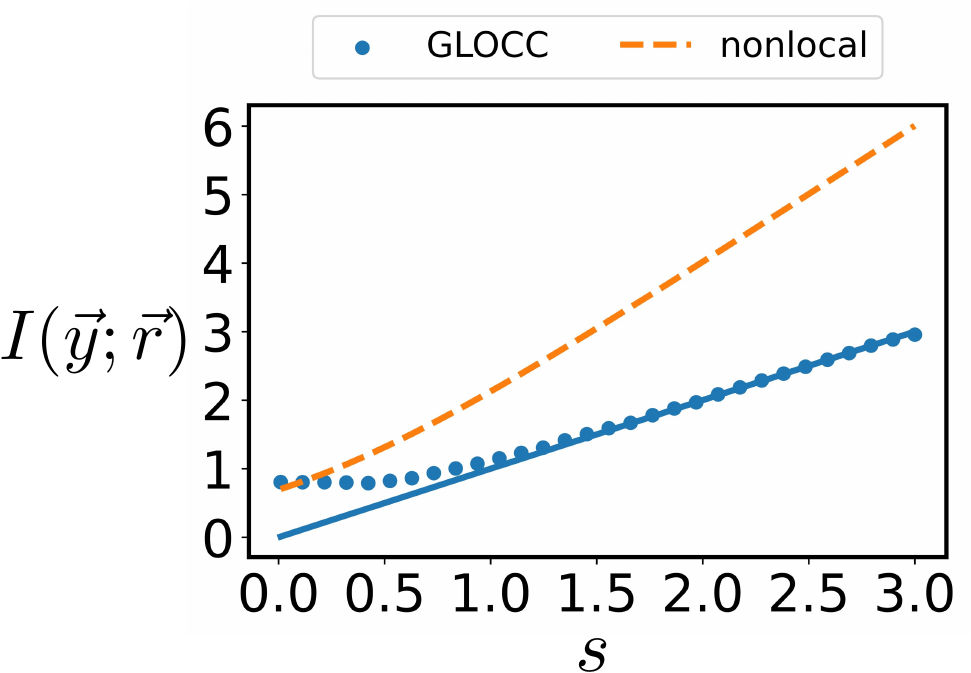}
\caption{The mutual information $I(\vec{y}; \vec{r})$ is presented as a function of the squeezing parameter $s$ for both GLOCC and nonlocal Gaussian measurements. For GLOCC, $I(\vec{y}; \vec{r})$ is obtained through numerical optimization. In the case of nonlocal Gaussian measurements, $I(\vec{y}; \vec{r})$ is calculated based on the measurements of $q_1 - q_2$ and $p_1 + p_2$, as introduced in Proposition \ref{prop:nonlocal_Gaussian}. The solid line represents the fitting function $I(\vec{y}; \vec{r}) = r$. Here, the parameter $\sigma$ is set to 1.} 
\label{fig:mutual_information}
\end{center}
\end{figure}

The above results serve as the CV counterpart to the fact that in the discrete-variable case, LOCC cannot fully distinguish the four Bell states. Local projection onto ${\ket{0}, \ket{1}}$ can only distinguish the states $\ket{00} \pm \ket{11}$ or $\ket{01} \pm \ket{10}$. Similarly, local projection onto ${\ket{+}, \ket{-}}$ can distinguish either $\ket{00} + \ket{11}$ and $\ket{01} + \ket{10}$, or $\ket{11} - \ket{00}$ and $\ket{01} - \ket{10}$. For Gaussian states, nonlocal Gaussian measurements can simultaneously estimate both $q_1 - q_2$ and $p_1 + p_2$ with perfect precision. In contrast, GLOCC can only estimate either $q_1 - q_2$, or $p_1 + p_2$, (or any linear combination of them), by relying on local homodyne detection. This intuition also leads to a no-go theorem: 
\begin{thm}
If we want to estimate any single linear combination $t_1(a - c) + t_2(b + d)$ in Eq.~\ref{rho_two_mode}, the performance of nonlocal Gaussian measurements and GLOCC measurements labeled by $\vec{y}$ as in Eq.~\ref{Pi_y} will be comparable in the sense that the mutual information for both cases are $I(\vec{y};\vec{r})=s$ to the leading order as $s\rightarrow\infty$.
\end{thm}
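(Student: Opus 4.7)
The plan is to establish matching upper and lower bounds on the mutual information, both scaling as $s$ to leading order, which apply uniformly to nonlocal Gaussian and GLOCC measurements when only the scalar $T := t_1(a-c)+t_2(b+d)$ is being transmitted. Setting $\vec{t}=(t_1,t_2,-t_1,t_2)^T$ so that $T=\vec{t}^{\,T}\vec{r}$, I will reuse the Gaussian linear-model setup already derived above: $P(\vec{y}|\vec{r})=\mathcal{N}(\vec{r},V)$ with $V=V_\Pi+V_\rho$ and the Gaussian prior $\vec{r}\sim\mathcal{N}(0,\sigma^2 I)$ from Eq.~\eqref{eq:prior_I}. A direct Bayesian computation for the jointly Gaussian pair $(T,\vec{y})$ gives
\begin{equation*}
I(\vec{y};T)=\tfrac{1}{2}\log\!\left(\frac{\sigma^{2}\|\vec{t}\|^{2}}{\vec{t}^{\,T}(V^{-1}+\sigma^{-2}I)^{-1}\vec{t}}\right)\xrightarrow{\sigma\to\infty}\tfrac{1}{2}\log\!\left(\frac{\sigma^{2}\|\vec{t}\|^{2}}{\vec{t}^{\,T}V\vec{t}}\right),
\end{equation*}
reducing the entire theorem to bounds on the single scalar $\vec{t}^{\,T}V\vec{t}$.

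For the upper bound, which applies to any Gaussian measurement (local or nonlocal), I would substitute into the spectral decomposition of $V_\rho$ in Eq.~\eqref{Vrho_spectrum} to verify by inspection that $\vec{t}=\sqrt{2}\,(t_{2}\,\omega_{\rho 3}-t_{1}\,\omega_{\rho 4})$, i.e.\ $\vec{t}$ lies entirely in the two-dimensional squeezed eigenspace of $V_\rho$. Hence $V_\rho\vec{t}=e^{-2s}\vec{t}$, so $\vec{t}^{\,T}V_\rho\vec{t}=2e^{-2s}(t_1^2+t_2^2)$, and the positivity $V_\Pi\geq 0$ of any physical Gaussian measurement covariance forces $\vec{t}^{\,T}V\vec{t}\geq 2e^{-2s}(t_1^2+t_2^2)$. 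Plugging back yields $I(\vec{y};T)\leq s+O(1)$ even for optimally chosen nonlocal measurements.

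For the matching lower bound I would exhibit an explicit GLOCC protocol saturating this floor: Alice homodynes the generalized quadrature $t_1 q_1 + t_2 p_1$ on her mode and Bob simultaneously homodynes $-t_1 q_2 + t_2 p_2$ on his, then they classically add their outcomes to estimate $T$. The noise variance of the combined estimator is $\mathrm{Var}[t_1(\Delta q_1-\Delta q_2)+t_2(\Delta p_1+\Delta p_2)]$, and the two-mode squeezed cross-correlations $\langle q_1 q_2\rangle=\sinh 2s$ and $\langle p_1 p_2\rangle=-\sinh 2s$ exactly cancel the diagonal $\cosh 2s$ contributions to give total noise $2e^{-2s}(t_1^2+t_2^2)=\vec{t}^{\,T}V_\rho\vec{t}$. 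This gives $I(\vec{y};T)\geq s-O(1)$ for GLOCC, and since GLOCC is contained in the nonlocal Gaussian class, the nonlocal lower bound follows automatically.

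The main conceptual point—rather than a technical obstacle—is recognizing that encoding a single linear combination $T$ requires matching only one squeezed-quadrature noise floor, which is achievable locally by a judicious homodyne choice on each side. The Weyl-inequality and PPT obstructions that make Theorem~\ref{thm:I_GLOCC} nontrivial arise precisely from having to saturate the squeezed floor in \emph{two} orthogonal directions simultaneously; with only one direction to beat, GLOCC and nonlocal Gaussian measurements are genuinely on equal footing and the theorem reduces to the calculations sketched above.
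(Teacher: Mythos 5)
Your proposal is correct, and it rests on the same core mechanism as the paper's proof: the target direction $\vec{t}=[t_1,t_2,-t_1,t_2]^T$ lies entirely in the squeezed eigenspace of $V_\rho$ (indeed $\vec{t}=\sqrt{2}(t_2\omega_{\rho3}-t_1\omega_{\rho4})$), so the intrinsic noise floor $e^{-2s}$ caps any Gaussian measurement at $I\approx s$, and the rotated local homodynes $t_1q_1+t_2p_1$ and $-t_1q_2+t_2p_2$ followed by classical addition saturate that floor --- exactly the GLOCC measurement the paper uses. The packaging differs in two ways that are worth noting. First, the paper formalizes ``only $T$ is encoded'' by replacing the prior with the anisotropic $V_r$ of Eq.~\ref{eq:linear_ts} and writing $I=\frac{1}{2}\log\det(I+\sigma^2 M V^{-1})$ with a rank-one $M$, whereas you keep the isotropic prior of Eq.~\ref{eq:prior_I} and compute $I(\vec{y};T)$ for the scalar $T$ directly from the Gaussian posterior; the two formalizations agree to leading order, and yours avoids modifying the prior at the cost of carrying the other components as nuisance parameters. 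Second, you make the converse explicit ($V_\Pi\geq 0\Rightarrow \vec{t}^{\,T}V\vec{t}\geq e^{-2s}\|\vec{t}\|^2$, hence $I\leq s+O(1)$ even for nonlocal Gaussian measurements), which the paper leaves implicit after exhibiting achievability for both classes --- this is a genuine strengthening of the write-up. One small fix: the theorem is stated at fixed $\sigma$ with $s\to\infty$, so rather than invoking the $\sigma\to\infty$ limit you should bound the posterior variance directly via $(V^{-1}+\sigma^{-2}I)^{-1}\succeq(V_\rho^{-1}+\sigma^{-2}I)^{-1}$, which gives $I(\vec{y};T)\leq \frac{1}{2}\log(1+\sigma^2 e^{2s})=s+O(1)$; this is an $O(1)$ correction and does not affect the leading-order claim. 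You should also note (as the paper does through its $\delta\to 0$ construction) that the single-quadrature local homodynes are realized within the $\Pi_{\vec{y}}$ framework of Eq.~\ref{Pi_y} as a limit of $V_\Pi$ with diverging noise in the unmeasured quadratures.
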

\begin{proof}
Intuitively, this is because a local homodyne with an appropriate phase shift can estimate any linear combination $k_1a + k_2b$ and $k_3c+k_4 d$, enabling us to find $t_1(a - c) + t_2(b + d)$. %, $t^2+s^2=1/2$. 
To support this intuition, we can choose
\begin{equation}\begin{aligned}\label{eq:linear_ts}
&P(\vec{r})=\frac{1}{(2\pi)^2\sqrt{\det V_r}}\exp[-\frac{1}{2}\vec{r}^TV_r^{-1}\vec{r}],\\
&V_r={\sigma^2}\omega_1\omega_1^T+\frac{1}{\delta^2}\sum_{i=2}^4\omega_i\omega_i^T,\quad\delta\rightarrow0,\\
&\omega_1=[t_1,t_2,-t_1,t_2]^T,\quad \omega_2=[t_1,t_2,t_1,-t_2]^T,\\
&\omega_3=[-t_2,t_1,t_2,t_1]^T,\quad \omega_4=[-t_2,t_1,-t_2,-t_1]^T,
\end{aligned}\end{equation}
%\begin{equation}\begin{aligned}
%&p(a,b,c,d)=\frac{1}{(2\pi)^2\sqrt{\det V_r}}\exp[-\frac{1}{2}\vec{r}^TV_r^{-1}\vec{r}]\\
%&=\lim_{\delta\rightarrow0}\frac{1}{4\pi^2\sigma\delta^3}\exp\{-(tq_1-tq_2+sp_1+sp_2)^2/2\sigma^2\\
%&\quad\quad\quad\quad\quad\quad\quad\quad-(tq_1+tq_2+sp_1-sp_2)^2/2\delta^2\\
%&\quad\quad\quad\quad\quad\quad\quad\quad-(-sq_1+sq_2+tp_1+tp_2)^2/2\delta^2\\
%&\quad\quad\quad\quad\quad\quad\quad\quad-(-sq_1-sq_2+tp_1-tp_2)^2/2\delta^2\}.\\
%&\sim\exp\{-(tq_1-tq_2+sp_1+sp_2)^2\delta(tq_1+tq_2+sp_1-sp_2)\\
%&\times \delta(-sq_1+sq_2+tp_1+tp_2)\delta(-sq_1-sq_2+tp_1-tp_2)
%\end{aligned}\end{equation}
where, without loss of generality, we choose $t_1^2 + t_2^2 = 1/2$ for normalization. We fix all other relationships between $a$, $b$, $c$, $d$ by setting $\delta \rightarrow 0$, ensuring that all the information is encoded in the linear combination $t_1(a - c) + t_2(b + d)$. We find
\begin{equation}\begin{aligned}
&I(\vec{y};\vec{r})=\frac{1}{2}\log(\det(I+\sigma^2 M V^{-1})),\\
&V=V_\Pi+V_\rho,\quad M=\left[\begin{matrix}
1 & 0 & 0 & 0\\
0 & 0 & 0 & 0\\
0 & 0 & 0 & 0\\
0 & 0 & 0 & 0\\
\end{matrix}\right],
\end{aligned}\end{equation}
where we write the matrix in the basis $\{w_i\}_{i=1,2,3,4}$ in Eq.~\ref{eq:linear_ts}. It is easy to verify that we can choose $V_\Pi$ for both both GLOCC measurements and nonlocal measurement such that we have 
\begin{equation}
V=\left[\begin{matrix}
e^{-2s} & 0 & 0 & 0\\
0 & * & 0 & 0\\
0 & 0 & * & 0\\
0 & 0 & 0 & *\\
\end{matrix}\right],
\end{equation}
in the basis $\{w_i\}_{i=1,2,3,4}$ in Eq.~\ref{eq:linear_ts}. For GLOCC measurement, we choose to estimate $t_1q_1+t_2p_1$ and $-t_1q_2+t_2p_2$. For nonlocal measurement, we can directly estimate both $q_1-q_2$ and $p_1+p_2$ using homodyne detection. So, both GLOCC measurements and nonlocal measurement have $I(\vec{y};\vec{r})=s+o(s)$.
    
\end{proof}

\subsection{Data hiding}

Building on the intuition from the previous subsection, we aim to construct a data hiding scheme. GLOCC cannot simultaneously obtain $a - c$ and $b + d$ with high precision, whereas a nonlocal measurement can achieve precise estimation of both parameters. We propose to use the sign of $\alpha = (a - c)(b + d)$ to encode one bit of classical information.
\begin{proposition}\label{GLOCC_one}
Consider the pair of data hiding states 
\begin{equation}
\rho_{+}=2\int_{\alpha>0}d\vec{r}\, P(\vec{r})\rho_{\vec{r}},\quad \rho_{-}=2\int_{\alpha<0}d\vec{r}\, P(\vec{r})\rho_{\vec{r}},
\end{equation}
where the prefactor 2 is introduced for normalization, $\alpha = (a - c)(b + d)$, $\rho_{\vec{r}}$ is given in Eq.~\ref{rho_two_mode}, $P(\vec{r})$ is given in Eq.~\ref{eq:prior_I}. The sign of $\alpha$ is used to encoded one bit of classical information. 
The sign of $\alpha$ can be determined using nonlocal Gaussian measurements, with the success probability approaching 100\%. However, using only GLOCC measurements, the probability of correctly determining the sign of $\alpha$ will sufficiently deviate from 100\%.
\end{proposition}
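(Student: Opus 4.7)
The plan is to handle the two claims of the proposition separately.

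For the nonlocal half I would take the measurement from Proposition~\ref{prop:nonlocal_Gaussian}: a balanced beam splitter followed by two homodyne detectors realizes the joint Gaussian measurement of $\hat q_1 - \hat q_2$ and $\hat p_1 + \hat p_2$. On $\rho_{\vec r}$ the outcomes $(y_q, y_p)$ are jointly Gaussian with means $(a-c,\, b+d)$ and variance $2e^{-2s}$ in each marginal, and the decoder outputs the bit $\mathrm{sign}(y_q\, y_p)$. An error requires the quantum noise to flip at least one marginal sign, so the error event is contained in $\{|a-c| = O(e^{-s})\}\cup\{|b+d| = O(e^{-s})\}$; averaging against the Gaussian prior in Eq.~\ref{eq:prior_I} gives an error probability of order $e^{-s}/\sigma \to 0$, so the nonlocal success probability tends to $1$.

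For the GLOCC half I would promote Theorem~\ref{thm:I_GLOCC} from an asymptotic mutual-information no-go into a quantitative structural constraint on $V_\Pi$. The PPT step in its proof already shows that whenever the two smallest eigenvectors of $V_\Pi$ lie inside $\mathrm{span}\{\omega_{\rho 3},\omega_{\rho 4}\}$, the condition $TV_\Pi T + i\Omega \ge 0$ forces $\lambda_{\Pi 3}+\lambda_{\Pi 4} \ge 2$. Hence for every GLOCC POVM there is a unit vector $w$ in this 2D squeezed subspace with $w^T V_\Pi w \ge 1$, uniformly in $s$. Using the identifications $\omega_{\rho 3}^T \vec r = (b+d)/\sqrt{2}$ and $\omega_{\rho 4}^T \vec r = (c-a)/\sqrt{2}$, this means any GLOCC outcome pins down at most one linear combination $u = t_1(a-c) + t_2(b+d)$ with vanishing noise, while the orthogonal combination $v = -t_2(a-c) + t_1(b+d)$ retains posterior noise variance bounded below by $1$.

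I would then close the GLOCC argument by a Bayes-risk bound. Writing $\alpha = (a-c)(b+d) = t_1 t_2 (u^2 - v^2) + (t_1^2 - t_2^2)\, uv$, the sign of $\alpha$ depends non-trivially on $v$; since $v$ has Gaussian prior of variance $2\sigma^2$ but is observed only through Gaussian noise of variance at least $1$, the Bayes-optimal error probability for the sign of $\alpha$ is pointwise bounded below in $(t_1,t_2)\in S^1$ by a positive constant. Continuity and compactness then yield a uniform floor $\epsilon(\sigma)>0$, giving $P^{\mathrm{GLOCC}}_{\mathrm{success}} \le 1-\epsilon(\sigma)$ for all $s$, which is the claimed sufficient deviation from 100\%. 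The main obstacle is this last step: because $\alpha$ is quadratic rather than linear in the unresolved direction $v$, one needs a careful case analysis across the relative scales of $u$ and $v$ and across the choice of $(t_1,t_2)$ to extract a genuinely uniform error floor, even though pointwise positivity is immediate from the noise being finite and nonzero.
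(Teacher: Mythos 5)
Your nonlocal half is essentially the paper's argument (the $q_1-q_2$, $p_1+p_2$ measurement of Proposition~\ref{prop:nonlocal_Gaussian} with sign decoding), and that part is fine. The GLOCC half, however, has genuine gaps beyond the one you flag. First, the constraint you import from Theorem~\ref{thm:I_GLOCC} was derived there only for $V_\Pi$ whose eigenvectors are aligned with those of $V_\rho$; your claim that \emph{every} GLOCC covariance has a unit vector $w\in\mathrm{span}\{\omega_{\rho3},\omega_{\rho4}\}$ with $w^TV_\Pi w\ge 1$ is the general Duan--Simon EPR bound for PPT covariances ($\mathrm{Var}_\Pi((q_1-q_2)/\sqrt2)+\mathrm{Var}_\Pi((p_1+p_2)/\sqrt2)\ge 2$), which is true but needs its own short proof rather than a ``hence.'' Second, and more seriously, a lower bound on the \emph{marginal} measurement noise along $v$ does not by itself give the lower bound on the \emph{posterior} noise of $v$ that your Bayes-risk step uses: the decoder sees all four components of $\vec y$, and the measurement noise along $v$ could be correlated with (large) noise along the anti-squeezed directions $\omega_{\rho1,2}$, allowing partial subtraction. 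Ruling this out requires a conditional-variance argument using the fact that the state contributes independent $e^{2s}$ noise in those directions; you never address it. Third, the step you yourself identify as the obstacle—the uniform error floor over $(t_1,t_2)$ and over the known value of $u$ for the quadratic $\alpha=t_1t_2(u^2-v^2)+(t_1^2-t_2^2)uv$—is exactly where the quantitative ``sufficient deviation from $100\%$'' must be produced, so as written the proof of the proposition's second claim is not complete.

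For contrast, the paper takes a more computational route that sidesteps both of the last two issues: it evaluates the total variation distance directly by the change of variables $\vec Q=\vec Y-\vec R$, integrating out the Gaussian prior to get $TV=\int d\vec Q\,P(\vec Q)\,(1-\mathrm{erf}(|Q_1|/\sqrt2\sigma))(1-\mathrm{erf}(|Q_4|/\sqrt2\sigma))$, which depends only on the marginal noise distribution along the two EPR coordinates $Q_1,Q_4$. Thus $TV\to1$ forces both marginal variances of $\Sigma=U(V_\rho+V_\Pi)U^T$ in those coordinates to vanish, and the PPT eigenvalue computation from Theorem~\ref{thm:I_GLOCC} (the eigenvalue $\tfrac12(\lambda_{\Pi3}+\lambda_{\Pi4}-\sqrt{4+(\lambda_{\Pi3}-\lambda_{\Pi4})^2})$ becoming negative) forbids this for GLOCC, so the deviation from $TV=1$ falls out of the erf-integral without any posterior analysis or case analysis in $(t_1,t_2)$. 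If you want to salvage your route, you must (i) prove the uniform EPR bound for all PPT $V_\Pi$, (ii) show the conditional (not just marginal) noise on $v$ given the other outcome components stays bounded below, and (iii) actually carry out the uniform Bayes-risk estimate for the quadratic sign problem; alternatively, adopt the paper's reduction of $TV$ to the $(Q_1,Q_4)$ marginal, which makes (ii) and (iii) unnecessary.
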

\begin{proof}

For two probability distribution $P_0(y)$ and $P_1(y)$ and outcome $\vec{y}$, the error probability of distinguishing the two distribution is given by \cite{csiszar2011information,pardo2018statistical}
\begin{equation}\begin{aligned}
&P_{\text{err}}=\frac{1}{2}(1-TV), \\
&TV=\frac{1}{2}||P_0-P_1||=\frac{1}{2}\sum_{y}|P_0(y)-P_1(y)|.  
\end{aligned}\end{equation}
where $TV$ is total variation distance between two probability distribution. Our aim is to maximize the total variation $TV$ by designing the POVM $\{\Pi_{\vec{y}}\}_{\vec{y}}$. The total variation between two probability distributions  $P(\vec{y}|\pm)=\tr(\Pi_{\vec{y}}\rho_{\pm})$
\begin{equation}\begin{aligned}
&TV=\frac{1}{2}\int d\vec{y}\left|P(\vec{y}|+)-P(\vec{y}|-)\right|\\
&=\int d\vec{y}\bigg|\int_{\alpha>0}d\vec{r}P(\vec{y}|\vec{r})P(\vec{r}) -\int_{\alpha<0}d\vec{r}P(\vec{y}|\vec{r})P(\vec{r})\bigg|\\
&=\int_{\beta>0}d\vec{y}\left(\int_{\alpha>0}d\vec{r}P(\vec{y}|\vec{r})P(\vec{r})-\int_{\alpha<0}d\vec{r}P(\vec{y}|\vec{r})P(\vec{r})\right)\\
&\quad+\int_{\beta<0}d\vec{y}\left(\int_{\alpha<0}d\vec{r}P(\vec{y}|\vec{r})P(\vec{r})-\int_{\alpha>0}d\vec{r}P(\vec{y}|\vec{r})P(\vec{r})\right)\\
&=TV(++)-TV(-+)+TV(--)-TV(+-),
\end{aligned}\end{equation}
where we   assume $\beta=(y_1-y_3)(y_2+y_4)$, $TV(\pm\pm)=TV(\alpha=\pm,\beta=\pm)$ represents the four different integrals.

To do the above calculation, let's define $\vec{Y}$ and $\vec{R}$ 
\begin{equation}\begin{aligned}\label{eq:U}
&U=\frac{1}{\sqrt{2}}\left[\begin{matrix}
1 & 0 & -1 & 0\\
1 & 0 & 1 & 0\\
0 & 1 & 0 & -1\\
0 & 1 & 0 & 1
\end{matrix}\right],\\
&\vec{Y}=U\vec{y},\quad \vec{R}=U\vec{r},\quad \Sigma=UVU^T.
\end{aligned}\end{equation}
%Then, we have $P(\vec{y}|\vec{r})={\exp[-\frac{1}{2}(\vec{Y}-\vec{R})^T\Sigma^{-1}(\vec{Y}-\vec{R})]}/\left[(2\pi)^2\sqrt{\det \Sigma}\right]$, $P(\vec{r})={\exp[-\frac{1}{2}\vec{R}^TV_r^{-1}\vec{R}]}/\left[(2\pi)^2\sqrt{\det V_r}\right]$.
And the integral can be written as
\begin{equation}
\begin{aligned}
&TV(++)=\left(\int_0^\infty\!dR_1\int_0^\infty\!dR_4+\int_{-\infty}^0\!dR_1\int_{-\infty}^0\!dR_4\right)\\
&\times\left(\int_0^\infty\!dY_1\int_0^\infty\!dY_4+\int_{-\infty}^0\!dY_1\int_{-\infty}^0\!dY_4\right)\\
&\times\int_{-\infty}^{+\infty}\!dR_2\int_{-\infty}^{+\infty}\!dR_3\int_{-\infty}^{+\infty}\!dY_2\int_{-\infty}^{+\infty}\!dY_3\,P(\vec{y}|\vec{r})P(\vec{r}).
\end{aligned}
\end{equation}
%\begin{equation}\begin{aligned}
%&TV(++)=\left(\int_0^\infty dR_1\int_0^\infty dR_4+\int_{-\infty}^0 dR_1\int_{-\infty}^0 dR_4\right)\\
%&\times\left(\int_0^\infty dY_1\int_0^\infty dY_4+\int_{-\infty}^0 dY_1\int_{-\infty}^0 dY_4\right)\\
%&\times\int_{-\infty}^{+\infty} dR_2\int_{-\infty}^{+\infty} dR_3\int_{-\infty}^{+\infty} dY_2\int_{-\infty}^{+\infty} dY_3 P(\vec{y}|\vec{r})P(\vec{r}).
%\end{aligned}\end{equation}
We  further define $\vec{P}=\vec{Y}+\vec{R}$, $\vec{Q}=\vec{Y}-\vec{R}$, and  have $P(\vec{y}|\vec{r})=\exp[-\frac{1}{2}\vec{Q}^T\Sigma^{-1}\vec{Q}]/\left[(2\pi)^2\sqrt{\det \Sigma}\right]$, $P(\vec{r})={\exp[-\frac{1}{8}(\vec{Q}-\vec{P})^TV_r^{-1}(\vec{Q}-\vec{P})]}/\left[(2\pi)^2\sqrt{\det V_r}\right]$.
Through the procedure of changing variables and changing the order the integration, we  do the integration for $\vec{P}$ first since we have a simple assumption that $V_r=\sigma^2 I$, which gives
\begin{equation}
\begin{aligned}
&TV(++)=\frac{1}{4}\int_{-\infty}^{+\infty}\!dQ_1\!\int_{-\infty}^{+\infty}\!dQ_2\!\int_{-\infty}^{+\infty}\!dQ_3\!\int_{-\infty}^{+\infty}\!dQ_4 \\
&\quad\times P(\vec{y}|\vec{r})\left(2-\text{erf}\left(\frac{|Q_1|}{\sqrt{2}\sigma}\right)-\text{erf}\left(\frac{|Q_4|}{\sqrt{2}\sigma}\right)\right)\\
&\quad+\frac{1}{2}(\int_{0}^{+\infty}\!dQ_1\!\int_{0}^{+\infty}\!dQ_4+\int_{-\infty}^0\!dQ_1\!\int_{-\infty}^0\!dQ_4)\\
&\quad\times\int_{-\infty}^{+\infty}\!dQ_2\!\int_{-\infty}^{+\infty}\!dQ_3\,P(\vec{y}|\vec{r})\text{erf}\left(\frac{|Q_1|}{\sqrt{2}\sigma}\right)\text{erf}\left(\frac{|Q_4|}{\sqrt{2}\sigma}\right).
\end{aligned}
\end{equation}
Similarly, we can find $TV(+-),TV(--),TV(-+)$ and eventually get
\begin{equation}\begin{aligned}
TV&=\int_{-\infty}^{+\infty}dQ_1\int_{-\infty}^{+\infty}dQ_2\int_{-\infty}^{+\infty}dQ_3\int_{-\infty}^{+\infty}dQ_4 \\
&\times P(\vec{y}|\vec{r})\left(1-\text{erf}\left(\frac{|Q_1|}{\sqrt{2}\sigma}\right)\right)\left(1-\text{erf}\left(\frac{|Q_4|}{\sqrt{2}\sigma}\right)\right).
\end{aligned}\end{equation}
To get larger $TV$, we hope $|Q_{1,4}|$ to be distributed around 0 following the distribution $P(\vec{y}|\vec{r})$, which requires at least two eigenvalues of $\Sigma=U(V_\rho+V_\Pi)U^T$ to approach zero. Note that
\begin{equation}
UV_\rho U^T=\left[\begin{matrix}
e^{-2s} & 0 & 0 & 0\\
0 & e^{2s} & 0 & 0\\
0 & 0 & e^{2s} & 0\\
0 & 0 & 0 & e^{-2s}
\end{matrix}\right].
\end{equation}
Following the argument in the proof of Theorem \ref{thm:I_GLOCC}, GLOCC cannot have $V_\Pi$ such that at least two eigenvalues of $\Sigma$ approach zero, $TV$ of GLOCC is constant with sufficient deviation from 1. But for nonlocal measurement it is possible to achieve $TV\rightarrow1$ using the measurement constructed in Eq.~\ref{nonlocal}.
\end{proof}

\begin{figure}[!tb]
\begin{center}
\includegraphics[width=0.4\textwidth]{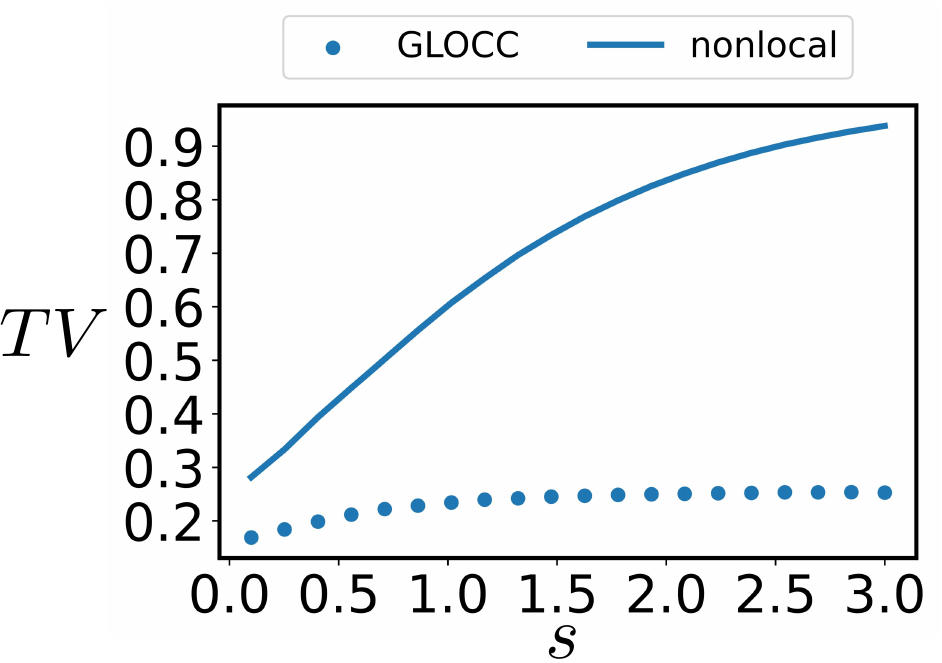}
\caption{The total variation distance $TV$ is presented as a function of the squeezing parameter $s$ for both GLOCC and nonlocal Gaussian measurements. For GLOCC, $TV$ is obtained through numerical optimization. In the case of nonlocal Gaussian measurements, $TV$ is calculated based on the measurements of $q_1 - q_2$ and $p_1 + p_2$, as introduced in Proposition \ref{prop:nonlocal_Gaussian}. Here, the parameter $\sigma$ is set to 1.} 
\label{fig:tv}
\end{center}
\end{figure}

Although the analytical proof shows that GLOCC cannot achieve $TV\rightarrow1$, we are unable to analytically determine a tight upper bound for $TV$ using GLOCC.
We now numerically optimize the GLOCC measurement to determine the optimal total variation $TV$. As shown in Fig.~\ref{fig:tv}, as the squeezing parameter $s$ increases, the $TV$ for the nonlocal Gaussian measurement approaches 1, whereas the $TV$ for the GLOCC measurement remains significantly below 1 as claimed in Proposition \ref{GLOCC_one}.

So far, we have demonstrated the gap in error probability when inferring the single bit of classical information encoded in the sign of $\alpha$ using GLOCC and nonlocal Gaussian measurement.
Similar to the approach of Ref.~\cite{terhal2001hiding}, we aim to further reduce the success probability of obtaining this bit of information  using GLOCC to nearly 1/2, equivalent to a random guess.  This can be accomplished by sending states with more modes 
to amplify the difference between the success probability. 
\begin{thm}
Consider the data hiding states, 
\begin{equation}
\begin{aligned}
\rho_{\text{even}}=2\int_{\text{even}}d\vec{r} \,P(\vec{r})\rho_{\vec{r}},\quad \rho_{\text{odd}}=2\int_{\text{odd}}d\vec{r} \,P(\vec{r})\rho_{\vec{r}},
\end{aligned}
\end{equation}
where prefactor 2 is introduced for normalization, $\vec{r}=[\vec{r}^1,\vec{r}^2,\cdots,\vec{r}^N]^T$, $\vec{r}^i=[a_i,b_i,c_i,d_i]^T$, $\rho_{\vec{r}}=\otimes_{i=1}^N\rho_{\vec{r}^i}$,  each $\rho_{\vec{r}^i}$ is given in Eq.~\ref{rho_two_mode}, $\int_{\text{even,odd}}$ denotes the integral over all the $\vec{r}$ such that the count of $\alpha_i= (a_i - c_i)(b_i + d_i) > 0$ is an even or odd number. And we assume prior distribution $P(\vec{r})$ is again Gaussian similar to Eq.~\ref{eq:prior_I} with $V_r=\sigma^2 I_{4N}$. The one bit of classical information is encoded in whether the state is $\rho_{\text{even}}$ or $\rho_{\text{odd}}$. The states $\rho_{\text{even},\text{odd}}$ can be distinguished with near 100\% success probability using nonlocal Gaussian measurements. In contrast, when restricted to GLOCC, the success probability for correctly distinguishing $\rho_{\text{even},\text{odd}}$ is close to $1/2$, equivalent to making a random guess.
   
\end{thm}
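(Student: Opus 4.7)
The plan is to exploit the tensor-product structure of the $N$-pair state and amplify the single-copy gap of Proposition \ref{GLOCC_one} via the parity encoding, in the spirit of Ref.~\cite{terhal2001hiding}.

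First I would establish the algebraic identity
\begin{equation}
\rho_{\text{even}} - \rho_{\text{odd}} = \frac{\pm 1}{2^{N-1}} \bigotimes_{i=1}^{N} \bigl( \rho_+^{(i)} - \rho_-^{(i)} \bigr),
\end{equation}
where $\rho_\pm^{(i)}$ denotes the single-copy data hiding state on the $i$-th mode pair from Proposition \ref{GLOCC_one}. This follows by splitting the integral over $\vec r$ into the $2^N$ sign patterns $\vec\sigma = (\mathrm{sign}(\alpha_i))_i$, using the product form $P(\vec r) = \prod_i P(\vec r^i)$, and applying $\mathds{1}_{\text{even}}(\vec\sigma) - \mathds{1}_{\text{odd}}(\vec\sigma) = \pm \prod_i \sigma_i$ (with a global sign depending on $N$ that is irrelevant for distinguishability).

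For the nonlocal claim, I would apply the measurement of Proposition \ref{prop:nonlocal_Gaussian} independently to each mode pair to estimate $\mathrm{sign}(\alpha_i)$, then classically output the parity of the $N$ estimates. By Proposition \ref{GLOCC_one}, each single-pair estimate is correct with probability $1-\delta$ for some $\delta \to 0$ as $s \to \infty$; since the pairs are independent, the parity is correct with probability $\tfrac{1}{2}\bigl(1+(1-2\delta)^N\bigr)$, which tends to $1$ for any fixed $N$ in the large-$s$ limit.

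For the GLOCC claim, let $\epsilon < 1$ denote the single-copy GLOCC bias of Proposition \ref{GLOCC_one} (bounded away from $1$ because the total variation of any GLOCC is strictly below $1$). The goal is to upper bound the $N$-pair success probability by $\tfrac{1}{2}+\tfrac{1}{2}\epsilon^N$, which tends to $1/2$ as $N \to \infty$. The main obstacle is that the GLOCC distinguishability norm is not obviously multiplicative on tensor products, since Alice may jointly process her $N$ modes (and likewise Bob). I would handle this by induction on $N$ following Ref.~\cite{terhal2001hiding}: for any adaptive GLOCC protocol, condition on the transcript of classical communication prior to the measurement on the final mode pair, observe that the resulting conditional POVM on the remaining pair is itself GLOCC, apply the single-pair bound $\epsilon$, and average over transcripts to obtain the $\epsilon^N$ decay. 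Verifying that each conditional Gaussian measurement continues to satisfy the covariance constraints used in Theorem \ref{thm:I_GLOCC}, so that the single-pair bound can actually be invoked after conditioning, is the most delicate technical step.
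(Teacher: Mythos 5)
Your factorization identity $\rho_{\text{even}}-\rho_{\text{odd}}=\pm 2^{-(N-1)}\bigotimes_{i=1}^N(\rho_+^{(i)}-\rho_-^{(i)})$ is correct (the sign is $(-1)^N$), and your nonlocal half is fine and essentially what the paper does: apply the measurement of Eq.~\ref{nonlocal} to each pair and take the parity of the $N$ sign estimates, giving success probability $\tfrac{1}{2}\bigl(1+(1-2\delta)^N\bigr)\to 1$ for fixed $N$ as $s\to\infty$. The genuine gap is in the GLOCC half. A general GLOCC protocol has no ``measurement on the final mode pair'' to condition on: Alice may perform a single local Gaussian measurement acting jointly on her halves of all $N$ pairs (e.g.\ a beamsplitter network across her modes followed by homodyne detection), with classical communication interleaved arbitrarily, so the transcript-conditioning induction never gets started; and even where a sequential structure exists, the conditional discrimination problem on a remaining pair is not the $\rho_+$-vs-$\rho_-$ problem of Proposition~\ref{GLOCC_one} (the conditional states depend on earlier outcomes, and the residual measurement need not be a valid single-pair GLOCC measurement satisfying the covariance constraints of Theorem~\ref{thm:I_GLOCC}). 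This is exactly why Ref.~\cite{terhal2001hiding} does not prove the $N$-copy bound by induction over adaptive protocols: it relaxes LOCC to PPT measurements, for which the bias bound tensorizes. So the step you flag as ``the most delicate technical step'' is in fact the missing proof, not a verification.

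The paper's route is the Gaussian analogue of that PPT relaxation. It models the restricted measurement as a single Gaussian POVM of the form Eq.~\ref{Pi_y} with a global $4N\times 4N$ covariance matrix $V_\Pi^N$ subject to the PPT condition $T_N V_\Pi^N T_N + i\Omega_N\geq 0$, computes the outcome-distribution total variation explicitly, and shows it factorizes into per-pair error-function integrals; driving any one factor to $1$ forces $\omega^i_{\rho 3}$ and $\omega^i_{\rho 4}$ (Eq.~\ref{Vrho_spectrum}) to be eigenvectors of $V_\Pi^N$ with vanishing eigenvalues, whereupon $T_N(\omega^i_{\rho3}+i\,\omega^i_{\rho4})$ witnesses a negative eigenvalue of $T_NV_\Pi^NT_N+i\Omega_N$. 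Hence every per-pair factor is bounded away from $1$ and $TV^N\to 0$ as $N\to\infty$, which is the $\epsilon^N$-type decay you want but obtained without any induction over adaptive strategies. To repair your argument you would need either to adopt this covariance-matrix (PPT-Gaussian) measurement model and redo the factorization there, or to prove a genuine multiplicativity statement for the GLOCC bias evaluated on $\bigotimes_i(\rho_+^{(i)}-\rho_-^{(i)})$; you would also need the single-pair GLOCC bias to be bounded away from $1$ uniformly in $s$, which Proposition~\ref{GLOCC_one} asserts but which your induction would additionally require for the conditional problems.
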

\begin{proof}
The covariance matrix of the state $\rho_{\vec{r}}$ is
\begin{equation}
V^N_\rho=\bigoplus_{i=1}^N V_\rho,
\end{equation}
where $V_{\rho}$ is given in Eq.~\ref{eq:TMSV}.
And the total variation between $P(\vec{r}|\text{even,odd})=\tr(\rho_{\text{even,odd}}\Pi_{\vec{r}})$ is given by
\begin{equation}\begin{aligned}\label{eq:TV_N_GLOCC}
&TV^N=\frac{1}{2}\int d\vec{y}\left|P(\vec{r}|\text{even})-P(\vec{r}|\text{odd})\right|\\
&=\int d\vec{y}\left|{\int_{\text{even}} d\vec{r}P(\vec{y}|\vec{r})P(\vec{r})}-{\int_{\text{odd}} d\vec{r}P(\vec{y}|\vec{r})P(\vec{r})}\right|\\
&=\int_{\text{even}} d\vec{y}\left({\int_{\text{even}} d\vec{r}P(\vec{y}|\vec{r})P(\vec{r})}-{\int_{\text{odd}} d\vec{r}P(\vec{y}|\vec{r})P(\vec{r})}\right)\\
&\,+\int_{\text{odd}} d\vec{y}\left({\int_{\text{odd}} d\vec{r}P(\vec{y}|\vec{r})P(\vec{r})}-{\int_{\text{even}} d\vec{r}P(\vec{y}|\vec{r})P(\vec{r})}\right)\\
&=TV^N(e,e)-TV^N(o,e)+TV^N(o,o)-TV^N(e,o),
\end{aligned}\end{equation}
where $\vec{y}=[\vec{y}^1,\vec{y}^2,\cdots,\vec{y}^N]^T$, $\vec{y}^i=[y^i_1,y^i_2,y^i_3,y^i_4]^T$, the even or odd labeled for the integral interval means the number of $\alpha_i=(a_i-c_i)(b_i+d_i)$ or $\beta_i=(y^i_1-y^i_3)(y^i_2+y^i_4)$ is even or odd. $TV(e,e)$ represents the count of $\alpha_i > 0$ or $\beta_i>0$ is an even number, with similar definitions for $TV(e,o)$, $TV(o,e)$, and $TV(o,o)$. We again change variables and switch the order of integral in a similar fashion, which allows us to simplify the equations as
%We again define $Y,R,Q,P$ in a similar fashion, which 
\begin{equation}
\begin{aligned}
TV^N(e,e) &= \sum_{\substack{\#\,\alpha_i>0\,\text{even}\\\#\,\beta_i>0\,\text{even}}} 
\left(\prod_{i=1}^N \int_{\beta_i}\!\!d\vec{y}^i \int_{\alpha_i}\!\!d\vec{r}^i\, P(\vec{r}^i)\right) P(\vec{y}|\vec{r}) \\
&= \sum_{\substack{\#\,\alpha_i>0\,\text{even}\\\#\,\beta_i>0\,\text{even}}}
\prod_{i=1}^N \hat{f}(\alpha_i, \beta_i) P(\vec{y}|\vec{r}).
\end{aligned}
\end{equation}
%\begin{equation}\begin{aligned}
%TV^N(++)&=\sum_{(\#\,\,\text{of} \,\,\alpha_i>0) \,\,\text{even}}\,\,\sum_{(\#\,\,\text{of} \,\,\beta_i>0) \,\,\text{even}}\int_{\beta_1}d\vec{y}^1\int_{\alpha_1}d\vec{r}^1\int_{\beta_2}d\vec{y}^2\int_{\alpha_2}d\vec{r}^2\cdots\int_{\beta_N}d\vec{y}^N\int_{\alpha_N}d\vec{r}^N P(\vec{y}|\vec{r})P(\vec{r})\\
%&=\sum_{(\#\,\,\text{of} \,\,\alpha_i>0) \,\,\text{even}}\,\,\sum_{(\#\,\,\text{of} \,\,\beta_i>0) \,\,\text{even}}\left(\prod_{i=1}^N\int_{\beta_i}d\vec{y}^i\int_{\alpha_i}d\vec{r}^i P(\vec{r}^i)\right) P(\vec{y}|\vec{r})\\
%&=\sum_{(\#\,\,\text{of} \,\,\alpha_i>0) \,\,\text{even}}\,\,\sum_{(\#\,\,\text{of} \,\,\beta_i>0) \,\,\text{even}}\left(\prod_{i=1}^N\hat{f}(\alpha_i,\beta_i)\right) P(\vec{y}|\vec{r}),
%\end{aligned}\end{equation}
where $\sum_{\substack{\#\,\alpha_i>0\,\text{even}}} $ denotes a summation over all cases where the count of $\alpha_i > 0$ is an even number.   We emphasize that $\hat{f}(\alpha_i, \beta_i)$ is not a function as it includes an integral involving $P(\vec{y} | \vec{r})$. %Note that in the previous discussion of $TV(\pm, \pm)$ in Proposition~\ref{GLOCC_one}, we have found $\hat{f}(\alpha_i, \beta_i)$.
\begin{equation}\begin{aligned}
&TV^N(e,e)-TV^N(e,o)\\
&=\left(\sum_{\substack{\#\,\alpha_i>0\,\text{even}\\\#\,\beta_i>0\,\text{even}}} -\sum_{\substack{\#\,\alpha_i>0\,\text{odd}\\\#\,\beta_i>0\,\text{even}}} \right)
\left(\prod_{i=1}^N \hat{f}(\alpha_i,\beta_i)\right) P(\vec{y}|\vec{r})\\
&=\sum_{\#\,\beta_i>0\,\text{even}}\prod_{i=1}^N\left(-\hat{f}(\alpha_i\!>\!0,\beta_i)+\hat{f}(\alpha_i\!<\!0,\beta_i)\right) P(\vec{y}|\vec{r}), 
\end{aligned}\end{equation}
%\begin{equation}\begin{aligned}
%&TV^N(++)-TV^N(+-)\\
%&=\,\,\sum_{\# \,\beta_i>0 \,\text{even}}\left(\sum_{\#\,\alpha_i>0\,\text{even}}-\sum_{\#\,\alpha_i>0\,\text{odd}}\right)\left(\prod_{i=1}^N \hat{f}(\alpha_i,\beta_i)\right) P(\vec{y}|\vec{r})\\
%&=\,\,\sum_{\#\,\beta_i>0\,\text{even}}\,\,\prod_{i=1}^N\left(-\hat{f}(\alpha_i>0,\beta_i)+\hat{f}(\alpha_i<0,\beta_i)\right) P(\vec{y}|\vec{r}), 
%\end{aligned}\end{equation}
\begin{equation}\begin{aligned}
&TV^N(o,o)-TV^N(o,e)\\
&=\left(\sum_{\substack{\#\,\alpha_i>0\,\text{odd}\\\#\,\beta_i>0\,\text{odd}}} -\sum_{\substack{\#\,\alpha_i>0\,\text{even}\\\#\,\beta_i>0\,\text{odd}}} \right)
\left(\prod_{i=1}^N \hat{f}(\alpha_i,\beta_i)\right) P(\vec{y}|\vec{r})\\
&=-\sum_{\#\,\beta_i>0\,\text{odd}}\prod_{i=1}^N\left(-\hat{f}(\alpha_i\!>\!0,\beta_i)+\hat{f}(\alpha_i\!<\!0,\beta_i)\right) P(\vec{y}|\vec{r}), 
\end{aligned}\end{equation}
%\begin{equation}\begin{aligned}
%&TV^N(--)-TV^N(-+)\\
%&=\,\,\sum_{\#\,\beta_i>0 \,\text{odd}}\left(\sum_{\#\,\alpha_i>0\,\text{odd}}-\sum_{\#\,\alpha_i>0\,\text{even}}\right)\left(\prod_{i=1}^N\hat{f}(\alpha_i,\beta_i)\right) P(\vec{y}|\vec{r})\\
%&=-\,\,\sum_{\#\,\beta_i>0\,\text{odd}}\,\,\prod_{i=1}^N\left(-\hat{f}(\alpha_i>0,\beta_i)+\hat{f}(\alpha_i<0,\beta_i)\right) P(\vec{y}|\vec{r}), 
%\end{aligned}\end{equation}
\begin{equation}\begin{aligned}
%TV^N&=\left(\,\,\sum_{(\#\,\,\text{of} \,\,\beta_i>0) \,\,\text{even}}-\,\,\sum_{(\#\,\,\text{of} \,\,\beta_i>0) \,\,\text{odd}}\right)\prod_{i=1}^N\left(-\hat{f}(\alpha_i>0,\beta_i)+\hat{f}(\alpha_i<0,\beta_i)\right) P(\vec{y}|\vec{r}) \\
TV^N&=\prod_{i=1}^N\bigg[\hat{f}(\alpha_i>0,\beta_i>0)-\hat{f}(\alpha_i>0,\beta_i<0)\\
&\quad-\hat{f}(\alpha_i<0,\beta_i>0)+\hat{f}(\alpha_i<0,\beta_i<0)\bigg]P(\vec{y}|\vec{r})\\
&=\prod_{i=1}^N\Bigg[\int_{-\infty}^{+\infty}dQ^i_1\int_{-\infty}^{+\infty}dQ^i_2\int_{-\infty}^{+\infty}dQ^i_3\int_{-\infty}^{+\infty}dQ^i_4 \\
&\quad\times\left[1-\text{erf}\left(\frac{|Q_1^i|}{\sqrt{2}\sigma}\right)\right]\left[1-\text{erf}\left(\frac{|Q_4^i|}{\sqrt{2}\sigma}\right)\right]\Bigg]P(\vec{y}|\vec{r}).
\end{aligned}\end{equation}
For each $i$, we need two eigenvalues approaching zero to have $TV^N\rightarrow1$, which cannot be achieved using GLOCC. Because for any of the $i$th mode, if we have $|Q_{1,4}^i|\rightarrow 0$, this requires both $\omega_{\rho3}^i=[0,\cdots,0,\omega_{\rho3},0,\cdots,0]$ and $\omega_{\rho4}^i=[0,\cdots,0,\omega_{\rho4},0,\cdots,0]$ to be  eigenvectors of $V^N=V_{\rho}^N+V_{\Pi}^N$ with eigenvalues approaching zero, where $\omega_{\rho3,4}$ is given in Eq.~\ref{Vrho_spectrum}. Since $\omega_{\rho3,4}$ are already the eigenvectors of $V_\rho^N$ with eigenvalues $e^{-2s}$, this means $\omega_{\rho3,4}$ must also be the eigenvectors of $V_\Pi^N$ with vanishing eigenvalues. 
We now want to check whether $T_NV_\Pi^N T_N+i \Omega_N\geq 0$, where $\Omega_N=\bigoplus_{i=1}^N\Omega$, $T_N=\bigoplus_{i=1}^N T$. Since $\omega_{\rho3,4}^i$ are eigenvectors of $V_\Pi^N$ with vanishing eigenvalues, $T_N\omega_{\rho3,4}^i$  are eigenvectors of $T_NV_\Pi^N T_N$ with vanishing eigenvalues. Note that $T_N\omega_{\rho3}^i+iT_N\omega_{\rho4}^i$ is an eigenvectors of $i\Omega_N$ with eigenvalues $-1$. This means $T_NV_\Pi^N T_N+i \Omega_N$ is not positive-semidefinite and hence the measurement is not PPT. For each $i$, $TV^N$ of GLOCC will get a factor deviating from 1 due to the integral of $Q_{1,4}^i$, which means $TV^N\rightarrow 0$ as $N\rightarrow \infty$. But nonlocal measurement can have $TV^N\rightarrow 1$, which can be achieved by implementing the measurement in Eq.~\ref{nonlocal} for each $\rho_{\vec{r}_i}$. We have thus constructed an example of data hiding based on the displaced two mode squeezed states from GLOCC. 

%For each $i$, we need two eigenvalues approaching zero to have $TV^N\rightarrow1$, which cannot be achieved using GLOCC. Because for any of the $i$th mode, if we have $|Q_{1,4}^i|\rightarrow 0$, this requires both $[1,0,-1,0]$ and $[0,1,0,1]$ to be  eigenvectors of $V$ with eigenvalues approaching zero. Since they are already the eigenvectors of $V_\rho$ with eigenvalues $e^{-2r}$, this means they must also be the eigenvectors of $V_\Pi$ with vanishing eigenvalues. We now want to check whether $TV_\Pi T+i \Omega\geq 0$. Since $[1,0,-1,0]$ and $[0,1,0,1]$ are eigenvectors of $V_\Pi$ with vanishing eigenvalues, $[1,0,-1,0]$ and $[0,1,0,-1]$ are eigenvectors of $TV_\Pi T$ with vanishing eigenvalues. Note that $[i,-1,-i,1]=i[1,0,-1,0]-[0,1,0,-1]$ is an eigenvectors of $i\Omega$ with eigenvalues $-1$. This means $TV_\Pi T+i \Omega$ is not positive-semidefinite and hence the measurement is not PPT. For each $i$, GLOCC will get a constant deviating from 1, which means $TV^N\rightarrow 0$ as $N\rightarrow \infty$. But nonlocal measurement can have $TV^N\rightarrow 1$. We have thus constructed an example of data hiding based on the displaced two mode squeezed states from GLOCC. 
    
\end{proof}

\section{Data hiding from general Gaussian operations}

In this section, we consider data hiding from general Gaussian operations. We begin by considering a two-mode weak thermal state with zero displacement, described by the P representation \cite{mandel1995optical} 
\begin{equation}\begin{aligned}\label{rho_thermal}
&\rho=\int\frac{d^2\alpha d^2\beta}{\pi^2\det\Gamma}\exp(-\vec{\gamma}^\dagger \Gamma^{-1}\vec{\gamma})\ket{\vec{\gamma}}\bra{\vec{\gamma}},\\
&\vec{\gamma}=[\alpha,\beta]^T,\quad \Gamma=  \frac{\epsilon}{2}\left[
\begin{matrix}
1 & |g|e^{i\theta}\\
|g|e^{-i\theta} & 1
\end{matrix}\right],\\
&\ket{\vec{\gamma}}=\exp(\alpha \hat{a}^\dagger-\alpha^*\hat{a})\exp(\beta \hat{b}^\dagger-\beta^*\hat{b})\ket{0},
\end{aligned}\end{equation}
where $\epsilon$ is the mean photon number per temporal mode and assumed to be much less than one $\epsilon\ll 1$, $\hat{a},\hat{b}$ are the annihilation operators for the two modes.
The motivation for selecting this state comes from Ref.~\cite{tsang2011quantum}, which shows that nonlocal schemes for estimating $\theta$ outperform any LOCC, highlighting the advantage of nonlocal measurements. %The intuition is that local measurements cannot simultaneously distinguish vacuum states and measure the phase $\theta$.
This inspired us to encode information in the phase $\theta$ of a two-mode weak thermal state for quantum data hiding, effectively concealing the information from LOCC. %Importantly, the two-mode thermal state is separable, potentially making it the first example of data hiding based on a separable state. 
However, while this was our initial motivation, we made an unexpected discovery: Gaussian measurements, including nonlocal ones, perform worse than proper nonlocal non-Gaussian measurements. This reveals a scheme to hide one classical bit of information from any Gaussian measurement using only a separable state.
%Note this does not yet rule out the possibility that this is still a candidate for hiding one bit from any local measurement using separable state. I am still working toward this direction. 

%In the following, we will consider the problem of distinguishing two-mode weak thermal states, which give us a way to hide a bit of classical information from any Gaussian operations. 
To hide one bit of classical information, we choose $|g|=1$ and $\theta=0,\pi$ in Eq.~\ref{rho_thermal} as the two states $\rho_{\pm}$, which have the following covariance matrices
\begin{equation}
V_\pm=\left[
\begin{matrix}
1+\epsilon & 0 & \pm\epsilon & 0\\    
0 & 1+\epsilon & 0 & \pm\epsilon\\
\pm\epsilon & 0 & 1+\epsilon & 0\\
0 & \pm\epsilon & 0 & 1+\epsilon
\end{matrix}\right].
\end{equation}
For non-Gaussian measurement, it is easier to expand Eq.~\ref{rho_thermal} in Fock basis as a series of $\epsilon$,
\begin{equation}
\rho=(1-\epsilon)\ket{00}\bra{00}+\frac{\epsilon}{2}\left[\begin{matrix}
1 & |g|e^{i\theta}\\
|g|e^{-i\theta} & 1
\end{matrix}\right]+o(\epsilon).
\end{equation}
\begin{proposition}
Consider the pair of data hiding states $\rho_{+}^{\otimes N},\rho_{-}^{\otimes N}$, which is used to encode one bit of classical information. There exists a non-Gaussian measurement that can distinguish $\rho_{+}^{\otimes N},\rho_{-}^{\otimes N}$ with success probability close to 100\% when $N=\Theta(1/\epsilon)$.
%Consider $\rho$ given in Eq.~\ref{rho_thermal} with $|g|=1$, $\epsilon\ll1$. We want to distinguish the two cases of $\theta=0,\pi$ by measuring $N$ copies of states $\rho^{\otimes N}$. 
\end{proposition}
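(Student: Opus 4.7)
The plan is to exploit the observation that, to leading order in $\epsilon$, the encoded bit lives entirely in a pair of mutually orthogonal one-photon states, so a non-Gaussian measurement that projects onto the single-photon subspace turns the problem into perfect discrimination whenever a photon is detected. From the Fock-basis expansion supplied above,
\begin{equation}
\rho_\pm = (1-\epsilon)\ket{00}\bra{00} + \epsilon\,\ket{\psi_\pm}\bra{\psi_\pm}+o(\epsilon),\quad \ket{\psi_\pm}=\frac{\ket{01}\pm\ket{10}}{\sqrt{2}},
\end{equation}
and $\braket{\psi_+}{\psi_-}=0$, so the two states differ only in a one-photon component that can in principle be read off exactly, provided we can separate the vacuum from the two orthogonal one-photon states.

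First I would fix the measurement. Apply a balanced beam splitter $\hat a^\dagger\mapsto(\hat a^\dagger+\hat b^\dagger)/\sqrt{2}$, $\hat b^\dagger\mapsto(\hat a^\dagger-\hat b^\dagger)/\sqrt{2}$ to each copy (a Gaussian unitary) followed by photon-number-resolving detection on each output port (the non-Gaussian ingredient). A one-line calculation gives $\ket{\psi_+}\mapsto\ket{10}$, $\ket{\psi_-}\mapsto\ket{01}$, and $\ket{00}\mapsto\ket{00}$. Thus on each shot the outcome is either ``no click'' (with probability $1-\epsilon+o(\epsilon)$), a single click at port $a$ (probability $\epsilon+o(\epsilon)$ if the state is $\rho_+$ and only $o(\epsilon)$ if $\rho_-$), a single click at port $b$ (the opposite assignment), or a multi-photon event of total probability $o(\epsilon)$.

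Next I would specify the decoding rule: declare $\rho_+$ or $\rho_-$ according to which of the two single-click types is observed more often across the $N$ shots, breaking ties arbitrarily. With $N$ independent copies, the probability of never seeing any single-photon click is $(1-\epsilon+o(\epsilon))^N$, which for $N=C/\epsilon$ tends to $e^{-C}$ as $\epsilon\to 0$, and choosing $C$ large makes this as small as desired. The ambiguous or wrong contributions from the $o(\epsilon)$ tail (two-photon leakage into the wrong bin and similar terms) are bounded by a union bound over the $N$ shots by $N\cdot o(\epsilon)=o(1)$ when $N=\Theta(1/\epsilon)$. Combining the two estimates gives a success probability of at least $1-e^{-C}-o(1)$, which is pushed arbitrarily close to one by taking $C$ large before sending $\epsilon\to 0$, in line with the scaling $N=\Theta(1/\epsilon)$ claimed in the proposition.

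The main obstacle is essentially bookkeeping in the last step: verifying that the higher-order Fock contributions to $\rho_\pm$, once raised to the $N$-th tensor power with $N=\Theta(1/\epsilon)$, do not pile up into a non-negligible error. This reduces to a standard union-bound/concentration estimate once the single-copy outcome structure is fixed, so the only real conceptual move is recognizing that the encoded information is concentrated in the one-photon sector and that $\Theta(1/\epsilon)$ copies suffice to witness it with high probability.
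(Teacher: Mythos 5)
Your proof is correct and follows essentially the same route as the paper: you use the same projective measurement onto $\{\ket{00},(\ket{01}\pm\ket{10})/\sqrt{2}\}$ (realized via a balanced beam splitter plus photon counting) and the same $(1-\epsilon)^N$ analysis showing $N=\Theta(1/\epsilon)$ copies suffice. Your explicit union-bound treatment of the $o(\epsilon)$ higher-order Fock contributions is a minor extra bit of bookkeeping that the paper handles implicitly by working with the truncated expansion.
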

\begin{proof}

We first consider the success probability of a nonlocal non-Gaussian measurement on one copy of state $\rho$ in Eq.~\ref{rho_thermal}. The projective measurement and the corresponding probability of obtaining each outcome is 
\begin{equation}\begin{aligned}\label{nonGaussian_POVM}
&\ket{00}\bra{00},\quad P=1-\epsilon,\\
&\ket{\pm}\bra{\pm},\quad P=\frac{\epsilon}{2}(1\pm |g|\cos\theta),    
\end{aligned}
\end{equation}
where $\ket{\pm}=(\ket{01}\pm\ket{10})/\sqrt{2}$, $\ket{0},\ket{1}$ are the vacuum and single photon states. We do the same measurement for $N$ copies of the same state $\rho^{\otimes N}$, which has the probability
\begin{equation}\begin{aligned}
&P\big(k,m\big||g|,\theta\big)=C_N^kC_{N-k}^m(1-\epsilon)^k\\
&\times\left(\frac{\epsilon}{2}(1+|g|\cos\theta)\right)^m\left(\frac{\epsilon}{2}(1-|g|\cos\theta)\right)^{N-m-k},
\end{aligned}
\end{equation}
where $k=0,1,2,\cdots,N$ labels the number of times of getting $\ket{00}$, $m=0,1,2,\cdots,N-k$ labels the number of times we get $\ket{+}$. The total variation between the probability of measuring the two data hiding states $\rho_+^{\otimes N}$ and $\rho_-^{\otimes N}$ is
\begin{equation}\begin{aligned}
TV&=\frac{1}{2}\sum_{k=0}^N\sum_{m=0}^{N-k}|P(k,m|+)-P(k,m|-)|\\
&=\frac{1}{2}\sum_{k=0}^N\sum_{m=0}^{N-k}C_N^kC_{N-k}^m(1-\epsilon)^k\\
&\quad\quad\quad\quad\times|\epsilon^m 0^{N-k-m}-0^m \epsilon^{N-k-m}|\\
&=\sum_{k+m=N,m\neq 0}C_N^k(1-\epsilon)^k\epsilon^m\\
&=1-(1-\epsilon)^N\approx N\epsilon+o(\epsilon).
\end{aligned}\end{equation}
So, to achieve a success probability close to one, we need $N=\Theta(1/\epsilon)$.%$N\gtrsim 1/\epsilon$ copies of the states.    
\end{proof}

\begin{thm}
%Consider $\rho$ given in Eq.~\ref{rho_thermal} with $|g|=1$, $\epsilon\ll1$. We want to distinguish the two cases of $\theta=0,\pi$ by measuring $N$ copies of states $\rho^{\otimes N}$. 
Consider the pair of data hiding states $\rho_{+}^{\otimes N},\rho_{-}^{\otimes N}$, which is used to encode one bit of classical information. Any Gaussian measurement can distinguish  $\rho_{+}^{\otimes N},\rho_{-}^{\otimes N}$ with error probability $P_{\text{err}}\geq \frac{1}{2}(1-\sqrt{2N}\epsilon)$ for $N=o(\epsilon^{-2})$. 
When $N = \Theta(\epsilon^{-1})$ and $\epsilon \to 0$, any Gaussian measurement performs almost no better than random guessing.
\end{thm}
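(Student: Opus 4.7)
Since $\rho_\pm^{\otimes N}$ is a zero-displacement Gaussian state and $\Pi_{\vec y}$ (cf.~Eq.~\ref{Pi_y}, lifted to $2N$ modes) is parametrized by a Gaussian state with covariance $V_\Pi$, the outcome distribution $P(\vec y|\pm)=\tr(\Pi_{\vec y}\rho_\pm^{\otimes N})$ is a zero-mean multivariate Gaussian on $\mathbb{R}^{4N}$ with covariance $\Sigma_\pm=V_\Pi+V_\pm^{(N)}$, where $V_\pm^{(N)}:=\bigoplus_{i=1}^N V_\pm$, and $V_\Pi$ is constrained only by the uncertainty relation $V_\Pi+i\Omega_N\ge 0$ with $\Omega_N:=\bigoplus_{i=1}^N\Omega$. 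The plan is to bound the total variation by the classical Bhattacharyya coefficient via $TV\le\sqrt{1-F^2}$, $F=\int\sqrt{P_+P_-}\,d\vec y$, and to exploit the clean closed form of $F$ for zero-mean Gaussians.

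\textbf{Closed form for $F^2$.} Decompose $V_\pm=(1+\epsilon)I_4\pm\epsilon\tilde V$, where the $4\times 4$ off-diagonal block $\tilde V$ encoding the coherence satisfies $\tilde V^2=I_4$. Set $A:=\bigoplus_{i=1}^N\tilde V$ and $\Sigma_0:=V_\Pi+(1+\epsilon)I_{4N}$, so that $\Sigma_\pm=\Sigma_0\pm\epsilon A$ and $\Sigma_++\Sigma_-=2\Sigma_0$. Introducing the symmetric $B:=\Sigma_0^{-1/2}A\Sigma_0^{-1/2}$, the identities $\det\Sigma_\pm=\det(\Sigma_0)\det(I\pm\epsilon B)$ and $\det((\Sigma_++\Sigma_-)/2)=\det(\Sigma_0)$ combine with the standard Gaussian Bhattacharyya formula to give $F^2=\sqrt{\det(I-\epsilon^2 B^2)}$.

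\textbf{Uniform bound on $\tr(B^2)$.} The crux is to show $\tr(B^2)\le 4N$ for \emph{every} admissible $V_\Pi$, including those entangling all $N$ copies. Taking the real part of $v^\top(V_\Pi+i\Omega_N)v\ge 0$ for real $v$ yields $V_\Pi\ge 0$, hence $\Sigma_0\ge(1+\epsilon)I_{4N}$ and $\|\Sigma_0^{-1}\|_{\mathrm{op}}\le 1/(1+\epsilon)$. Two applications of $\|MN\|_F\le\|M\|_{\mathrm{op}}\|N\|_F$ give $\|B\|_F\le\|\Sigma_0^{-1/2}\|_{\mathrm{op}}^2\|A\|_F=\|\Sigma_0^{-1}\|_{\mathrm{op}}\|A\|_F$, and together with $\tr(A^2)=N\tr(\tilde V^2)=4N$ this yields $\tr(B^2)\le 4N/(1+\epsilon)^2\le 4N$.

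\textbf{Finishing up.} For $N=o(\epsilon^{-2})$, every eigenvalue $\mu_k^2$ of $B^2$ obeys $\epsilon^2\mu_k^2\le 4N\epsilon^2=o(1)$, so one may expand $\sqrt{\det(I-\epsilon^2 B^2)}=1-\tfrac{\epsilon^2}{2}\tr(B^2)+O(N^2\epsilon^4)$. Hence $1-F^2\le 2N\epsilon^2(1+o(1))$, so $TV\le\sqrt{2N}\,\epsilon$ and $P_{\mathrm{err}}=\tfrac12(1-TV)\ge\tfrac12(1-\sqrt{2N}\,\epsilon)$. In the weak regime $N=\Theta(1/\epsilon)$, $\sqrt{2N}\,\epsilon=\Theta(\sqrt\epsilon)\to 0$, so $P_{\mathrm{err}}\to 1/2$. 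The main obstacle is the uniform bound on $\tr(B^2)$; it is handled by the fact that only the physicality constraint $V_\Pi\ge 0$ enters, with no reference to the internal structure of $V_\Pi$, so the conclusion applies equally to measurements entangled across all $N$ copies.
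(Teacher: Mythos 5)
Your proposal is correct, and it reaches the paper's bound $TV\le\sqrt{2N}\,\epsilon$ (hence $P_{\mathrm{err}}\ge\frac{1}{2}(1-\sqrt{2N}\epsilon)$ and the $N=\Theta(1/\epsilon)$ conclusion) by a genuinely different route. The paper bounds $TV$ via Pinsker's inequality and a second-order perturbative expansion of the Kullback--Leibler divergence between the two zero-mean Gaussian outcome distributions, obtaining $D_{KL}=\frac{\epsilon^2}{4}\tr\!\left(A^{-1}(B_1-B_2)A^{-1}(B_1-B_2)\right)$ after the first-order terms cancel, and then uses $0\le A^{-1}\le I$ (i.e.\ only $\Sigma_\Pi\ge 0$) to get $D_{KL}\le 4N\epsilon^2$. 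You instead use the Hellinger-type bound $TV\le\sqrt{1-F^2}$ with the exact Gaussian Bhattacharyya coefficient, where the symmetric splitting $\Sigma_\pm=\Sigma_0\pm\epsilon A$ gives the closed form $F^2=\sqrt{\det(I-\epsilon^2B^2)}$ with no perturbation theory needed until the very last step, and the odd-order cancellation is automatic rather than something to verify; your uniform bound $\tr(B^2)\le 4N/(1+\epsilon)^2$ via $\|\Sigma_0^{-1}\|_{\mathrm{op}}\le 1$ and $\tr(A^2)=4N$ plays exactly the role of the paper's $\tr(A^{-2})\le 4N$, and like the paper it invokes only positivity of $V_\Pi$, so entangling measurements across the $N$ copies are covered in both arguments. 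Both derivations share the same modeling assumption (canonical Gaussian POVM with outcome statistics $\mathcal{N}(0,V_\Pi+V_\pm^{(N)})$) and the same leading-order bookkeeping for $N=o(\epsilon^{-2})$; what your route buys is a cleaner exact intermediate expression (and, with the cruder estimate $1-\sqrt{\det}\le 1-\det\le\epsilon^2\tr(B^2)$, even a fully non-asymptotic bound $TV\le 2\sqrt{N}\epsilon$ at the cost of a slightly worse constant), while the paper's KL route is a more routine calculation that also lets the authors evaluate $D_{KL}$ explicitly for the specific heterodyne and beam-splitter-plus-homodyne examples discussed afterwards.
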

\begin{proof}
For any Gaussian measurement, the total variation distance between two probability $P(\vec{y}|\pm)=\tr(\rho_{\pm}^{\otimes N}\Pi_{\vec{y}})$ is 
\begin{equation}\begin{aligned}
&TV=\frac{1}{2}\int d\vec{y}|P(\vec{y}|+)-P(\vec{y}|-)|,\\
&P(\vec{y}|\pm)=\frac{1}{(2\pi)^2\sqrt{\det V_N}}\exp(-\frac{1}{2}\vec{y}^TV_N^{-1}\vec{y}),
\end{aligned}
\end{equation}
where $V_N=V_{\pm}\otimes I_N+V_\Pi$, $V_\Pi$ is the covariance matrix describing the POVM of a Gaussian measurement, $\vec{y}$ consists of $\vec{y^i}=[y^i_1,y^i_2,y^i_3,y^i_4]^T$ as in Eq.~\ref{eq:TV_N_GLOCC}. We again define $\vec{Y^i}=U\vec{y^i}$ as in Eq.~\ref{eq:U}
and get
\begin{equation}\begin{aligned}
&P(\vec{y}|\pm)=P(\vec{Y}|\pm)=\frac{1}{(2\pi)^2\sqrt{\det \Sigma_N}}\exp(-\frac{1}{2}\vec{Y}^T\Sigma_N^{-1}\vec{Y}),\\
&\Sigma_N=\Sigma_{\pm}+\Sigma_\Pi, \quad \Sigma_{\pm}=(UV_{\pm}U^T)\otimes I_N,\\
&\Sigma_\Pi=(U\otimes I_N)V_\Pi (U\otimes I_N)^T,\\
&UV_+U^T=\left[\begin{matrix}
1 & 0 & 0 & 0 \\
0 & 1+2\epsilon & 0 & 0\\
0 & 0 & 1 & 0\\
0 & 0 & 0 & 1+2\epsilon
\end{matrix}\right],\\
&UV_-U^T=\left[\begin{matrix}
1+2\epsilon & 0 & 0 & 0 \\
0 & 1 & 0 & 0\\
0 & 0 & 1+2\epsilon & 0\\
0 & 0 & 0 & 1
\end{matrix}\right],
\end{aligned}\end{equation}
where $I_N$ is the identity matrix added for the direct product of $N$ copies of states. The total variation can be upper bounded by Pinsker's inequality \cite{csiszar2011information,pardo2018statistical}
\begin{equation}
TV(P,Q)\leq\sqrt{\frac{1}{2}D_{KL}(P||Q)},
\end{equation}
where $D_{KL}(P||Q)$ is Kullback–Leibler divergence for two probability distribution $P,Q$. For the Gaussian probability distribution $P=\mathcal{N}(\mu_1,\Sigma_1)$, $Q=\mathcal{N}(\mu_2,\Sigma_2)$ \cite{csiszar2011information,pardo2018statistical}
\begin{equation}\begin{aligned}
&D_{KL}(\mathcal{N}(\mu_1,\Sigma_1)||\mathcal{N}(\mu_2,\Sigma_2))
\\
&=\frac{1}{2}\bigg[\tr(\Sigma_2^{-1}\Sigma_1)+(\mu_2-\mu_1)^T\Sigma_2^{-1}(\mu_2-\mu_1)\\
&\quad\quad\quad\quad\quad\quad-M+\ln\left(\frac{\det\Sigma_2}{\det\Sigma_1}\right)\bigg],
\end{aligned}\end{equation}
where $M$ is the dimension of the matrix $\Sigma_{1,2}$. For our case
\begin{equation}\begin{aligned}
&\Sigma_1=\Sigma_++\Sigma_\Pi=A+\epsilon B_1,\\
&\Sigma_2=\Sigma_-+\Sigma_\Pi=A+\epsilon B_2,\\
&A=I_{4N}+\Sigma_\Pi,\quad\mu_{1,2}=\vec{0},\quad M=4N,\\
&B_1=\left[\begin{matrix}
0 & 0 & 0 & 0\\
0 & 2I_{N} & 0 & 0\\
0 & 0 & 0 & 0\\
0 & 0 & 0 & 2I_{N} \\
\end{matrix}\right],\\
&B_2=\left[\begin{matrix}
2I_{N} & 0 & 0 & 0\\
0 & 0 & 0 & 0\\
0 & 0 & 2I_{N} & 0\\
0 & 0 & 0 & 0 \\
\end{matrix}\right].
\end{aligned}
\end{equation}
For $\Sigma=A+\epsilon B$, we can have the following expansion \cite{petersen2008matrix}
%\begin{equation}
%\Sigma^{-1}=A^{-1}-\epsilon A^{-1}BA^{-1}+\epsilon^2 A^{-1}BA^{-1}BA^{-1}+o(\epsilon^2),
%\end{equation}
%\begin{equation}\begin{aligned}
%\det\Sigma&=\det A\bigg[1+\epsilon\tr(A^{-1}B)\\
%&+\frac{\epsilon^2}{2}\left((\tr A^{-1}B)^2-\tr A^{-1}BA^{-1}B\right)+o(\epsilon^2)\bigg],  
%\end{aligned}\end{equation}
%We can then find the following expansion
\begin{equation}\begin{aligned}
\Sigma_2^{-1}\Sigma_1&=I+\epsilon A^{-1}(B_1-B_2)\\
&+\epsilon^2 A^{-1}B_2 A^{-1}(B_2-B_1)+o(\epsilon^2),
\end{aligned}\end{equation}
\begin{equation}\begin{aligned}
&\ln\left(\frac{\det\Sigma_2}{\det\Sigma_1}\right)=\epsilon\tr[A^{-1}(B_2-B_1)]\\
&+\epsilon^2\tr(A^{-1}B_1 A^{-1}B_1-A^{-1}B_2 A^{-1}B_2)+o(\epsilon^2),
\end{aligned}\end{equation}
\begin{equation}
D_{KL}=\frac{\epsilon^2}{4}\tr(A^{-1}(B_1-B_2)A^{-1}(B_1-B_2)).
\end{equation}
So, now the problem is the maximization of $\tr(A^{-1}(B_1-B_2)A^{-1}(B_1-B_2))$, which can be easily bounded by
\begin{equation}
\tr(A^{-1}(B_1-B_2)A^{-1}(B_1-B_2))\leq\tr(A^{-2}).
\end{equation}
Note that $A=I_{4N}+\Sigma_\Pi$, where $\Sigma_\Pi\geq 0$, we have
\begin{equation}
0\leq A^{-1}\leq I.
\end{equation}
We then have
\begin{equation}
D_{KL}\leq 4\epsilon^2N,\quad TV\leq \sqrt{2N}\epsilon.
\end{equation}
The required  $N$ to achieve nearly 100\% success probability of distinguishing $\rho_{\pm}^{\otimes N}$ is at least $\Omega(1/\epsilon^2)$. 
\end{proof}

If we choose $\epsilon \rightarrow 0$ and $N =\Theta( 1/\epsilon)$, the data hiding states $\rho_+^{\otimes N}$ and $\rho_-^{\otimes N}$ cannot be distinguished by Gaussian measurements but can be distinguished by the non-Gaussian measurement described in Eq.~\ref{nonGaussian_POVM}. We have thus constructed the data hiding states for general Gaussian operations. Let's check this discussion with some specific examples. Firstly, let's consider the local heterodyne detection at each mode, which has $\Sigma_\Pi=I_{4N}$, and 
\begin{equation}
D_{KL,hetero}=N\epsilon^2,
\end{equation}
which shows we need at least $N=\Omega(\epsilon^{-2})$ to have nearly 100\% success probability for distinguishing $\rho_{\pm}^{\otimes N}$.

Secondly, let's consider the case when we first combine the light from $\hat{a},\hat{b}$ modes on a balanced beam splitter and do homodyne detection on the two output ports. This measurement is described by
\begin{equation}
\Sigma_\Pi=\lim_{s\rightarrow\infty}\left[\begin{matrix}
e^{-2s}I_{N} & 0 & 0 & 0\\
0 & e^{2s}I_{N} & 0 & 0\\
0 & 0 & e^{2s}I_N & 0\\
0 & 0 & 0 & e^{-2s}I_N
\end{matrix}\right],
\end{equation}
\begin{equation}
D_{KL}=2N\epsilon^2.
\end{equation}
Note that for the second case, the measurement is already a nonlocal measurement. Intuitively, we are projecting onto  displaced two mode squeezed states with infinite squeezing. And with such a nonlocal measurement, we still at least need $N=\Omega( 1/\epsilon^2)$ to have nearly 100\% success probability for distinguishing $\rho_{\pm}^{\otimes N}$ as claimed.

In the above two example, we do not yet saturate the upper bound for $D_{KL}$. This is because, we only use the fact that $\Sigma_\Pi\geq 0$ in the derivation of upper bound. But as a valid POVM, $\Sigma_\Pi$ also need to satisfy some condition from the uncertainty principle \cite{weedbrook2012gaussian}. Indeed, we can easily read from the proof that, the following $\Sigma_\Pi$ achieves the upper bound
\begin{equation}
\Sigma_\Pi=\lim_{s\rightarrow\infty}e^{-2s}I_{4N}.
\end{equation}
This measurement essentially asks for the perfect estimation of all quadratures at the same time, which is not physically allowed.

%Intuitively, the advantage of nonlocal measurement, as discussed in Ref.~\cite{tsang2011quantum}, arises from the fact that local measurements cannot simultaneously distinguish between a vacuum and a single photon while estimating the phase $\theta$. Generally, any Gaussian measurement is less effective at distinguishing between vacuum and single photon states compared to general discrete variable measurements. Consequently, all Gaussian measurements suffer from vacuum noise, resulting in poor performance. This insight has practical implications for imaging and other tasks where weak thermal states are common, suggesting that the potential quantum advantage of non-Gaussian operations may have been previously underappreciated.

\section{Conclusion}

In this paper, we investigate two distinct scenarios of data hiding within the CV context. First, we introduce data hiding with respect to a new  class of operations, GLOCC. To establish the  intuition, we identify the CV counterpart to the challenge of distinguishing Bell entangled states using LOCC, which is a key insight underlying the initial data hiding proposal \cite{terhal2001hiding,divincenzo2002quantum}. Building on this intuition, we construct data-hiding states under GLOCC using the mixture of displaced two-mode squeezed states.

Second, we explore data hiding against general Gaussian operations. We propose a novel example of data-hiding states utilizing multiple copies of two-mode thermal states in the weak-strength limit. Notably, this construction is closely related to the interferometric imaging described in Ref.~\cite{tsang2011quantum,gottesman2012longer}. As such, the demonstrated advantage of non-Gaussian operations for data hiding may also suggest a broader advantage of non-Gaussian operations in imaging applications.

%In this paper, we have explored two distinct scenarios of data hiding in the CV context. Firstly, we introduce data hiding from a new set of operations, GLOCC. We first establish the intuition by finding the CV analog of the difficulty in distinguishing the Bell entangled states using local operations which is a core intuition for the initial proposal of data hiding \cite{terhal2001hiding,divincenzo2002quantum}. Building on this intuition, we construct data hiding states from GLOCC using displaced two-mode squeezed states. Secondly, we consider the data hiding from general Gaussian operations. We present a novel example of data hiding states using many copies of two mode thermal states in weak strength limit. Note as the construction is closely related to the interferometric imaging as in Ref.~\cite{tsang2011quantum}, the advantage of non-Gaussian operations should potentially imply advantage of non-Gaussian operations in imaging.

%demonstrate the CV analog of the difficulty in distinguishing displaced two-mode squeezed states, which are the CV counterparts of Bell pairs, using GLOCC. 

Several questions remain unanswered. Our data hiding states from GLOCC are mixtures of displaced two-mode squeezed states, which are non-Gaussian. It is intriguing to consider whether we can construct data hiding states from GLOCC using only Gaussian states. Additionally, our current data hiding scheme only hides one bit of classical information. Given that we are working with CV states, it is worth exploring whether we can hide a classical continuous variable.

\section*{Acknowledgements}
We thank Debbie Leung, Peixue Wu, Mao Lin for helpful discussions.
Y.W. and G.S. acknowledges funding from the Canada First Research Excellence Fund. Y.W. also acknowledges funding provided by Perimeter Institute for Theoretical Physics, a research institute supported in part by the Government of Canada through the Department of Innovation, Science and Economic Development Canada and by the Province of Ontario through the Ministry of Colleges and Universities.

\bibliography{main}

\end{document}